\definecolor{urlcol}{RGB}{0, 102, 204}
\let\if@IEEElastlinewassubequation\iffalse%
  \def\@IEEEeqnarrayXCR[#1]{%
    \if@eqnsw%
      \if@IEEEissubequation%
      \else%
        \refstepcounter{equation}%
        \addtocounter{equation}{-1}%
      \fi%
    \fi%
    \cref@orig@IEEEeqnarrayXCR[#1]}%
  \def\@IEEEeqnarrayXCR[#1]{%
    \if@eqnsw%
      \ifnum\c@IEEEsubequation>0\relax%
      \else%
        \refstepcounter{equation}%
        \addtocounter{equation}{-1}%
      \fi%
    \fi%
    \cref@orig@IEEEeqnarrayXCR[#1]}%
\def\IEEElabelanchoreqn#1{\bgroup
\def\@currentlabel{\p@equation\theequation}\relax
\def\@currentHref{\@IEEEtheHrefequation}\label{#1}\relax
\Hy@raisedlink{\hyper@anchorstart{\@currentHref}}\relax
\Hy@raisedlink{\hyper@anchorend}\egroup}
\newtheorem{theorem}{Theorem}[section]
\newtheorem{lemma}[theorem]{Lemma}
\newtheorem{proposition}[theorem]{Proposition}
\NewDocumentCommand\captionShortcut{ommo}{
	\IfValueTF{#1}{\caption[#1]{#2}}{\caption{#2}}
	\label{#3:\IfValueTF{#4}{#4}{\IfValueTF{#1}{#1}{#2}}}
}
\NewDocumentCommand\subcaptionShortcut{omm}{
	\subcaption{\IfValueTF{#1}{#1}{}\hfill\hphantom{}}
	\label{#2:#3}
}
\DeclarePairedDelimiter\Parentheses{\lparen}{\rparen}
\DeclarePairedDelimiter\Brackets{\lbrack}{\rbrack}
\DeclarePairedDelimiter\Braces{\lbrace}{\rbrace}
\DeclarePairedDelimiter\AbsouluteValue{\lvert}{\rvert}
\DeclarePairedDelimiter\Norm{\lVert}{\rVert}
\DeclarePairedDelimiter\Bra{\langle}{\rvert}
\DeclarePairedDelimiter\Ket{\lvert}{\rangle}
\DeclarePairedDelimiter\BraKet{\langle}{\rangle}
\DeclarePairedDelimiterX\BraKetTwo[2]{\langle}{\rangle}{%
  #1\delimsize\vert\mathopen{}#2}
\DeclarePairedDelimiterX\BraKetThree[3]{\langle}{\rangle}{%
  #1\delimsize\vert\mathopen{}#2%
  \delimsize\vert\mathopen{}#3%
}
\DeclarePairedDelimiterX\BraKetTwoStar[2]{\langle}{\rangle}{%
  #1\nonscript\,\delimsize\vert\nonscript\,\mathopen{}#2}
\DeclarePairedDelimiterX\BraKetThreeStar[3]{\langle}{\rangle}{%
  #1\nonscript\,\delimsize\vert\nonscript\,\mathopen{}#2\nonscript\,%
  \delimsize\vert\nonscript\,\mathopen{}#3%
}
\DeclarePairedDelimiter\BraceDot{\lbrace}{.}
\DeclarePairedDelimiter\DotBrace{.}{\rbrace}
\DeclarePairedDelimiterX\SetVerticalLine[2]{\lbrace}{\rbrace}{%
  #1\nonscript\:\delimsize\vert\allowbreak\nonscript\:\mathopen{}#2}
\DeclarePairedDelimiterX\SetColon[2]{\lbrace}{\rbrace}{%
  #1\nonscript\::\allowbreak\nonscript\:\mathopen{}#2}
\DeclareMathOperator\Span{span} 
\DeclareMathOperator\emptyOperator{}
\NewDocumentCommand\diff{om}{%
\mathop{}\!\mathrm{d}\IfValueTF{#1}{^{#1}\!}{}#2\emptyOperator\!}
\newcommand\pa[1]{\Parentheses*{#1}}
\newcommand\bk[1]{\Brackets*{#1}}
\newcommand\bc[1]{\Braces*{#1}}
\newcommand\abs[1]{\AbsouluteValue*{#1}}
\newcommand\norm[1]{\Norm*{#1}}
\newcommand\bcdot[1]{\BraceDot*{#1}}
\newcommand\dotbc[1]{\DotBrace*{#1}}
\providecommand\bra{}
\providecommand\ket{}
\providecommand\braket{}
\RenewDocumentCommand\bra{smd<>}{
  \IfBooleanTF{#1}{\Bra{#2}\IfValueTF{#3}{_{#3}}{}}{\Bra*{#2}\IfValueTF{#3}{_{#3}}{}}
}
\RenewDocumentCommand\ket{smd<>}{
  \IfBooleanTF{#1}{\Ket{#2}\IfValueTF{#3}{_{#3}}{}}{\Ket*{#2}\IfValueTF{#3}{_{#3}}{}}
}
\RenewDocumentCommand\braket{smood<>}{
  \IfBooleanTF{#1}
  {
    \IfValueTF{#3}{\IfValueTF{#4}{\BraKetThree{#2}{#3}{#4}}{\BraKetTwo{#2}{#3}}}%
    {\BraKet{#2}}\IfValueTF{#5}{_\IfValueTF{#5}{#5}{#3}}{}%
  }
  {
    \IfValueTF{#3}{\IfValueTF{#4}{\BraKetThreeStar*{#2}{#3}{#4}}{\BraKetTwoStar*{#2}{#3}}}%
    {\BraKet*{#2}}\IfValueTF{#5}{_\IfValueTF{#5}{#5}{#3}}{}%
  }
}
\NewDocumentCommand\ketbra{smd<>od<>d>>}{
  \IfBooleanTF{#1}
  {
    \Ket{#2}\IfValueTF{#6}{_{\IfValueTF{#3}{#3}{#6}}}{\IfValueTF{#3}{_{#3}}{}}\Bra*{\IfValueTF{#4}{#4}{#2}}%
    \IfValueTF{#6}{_{\IfValueTF{#5}{#5}{#6}}}{\IfValueTF{#5}{_{#5}}{}}%
  }
  {
    \Ket*{#2}\IfValueTF{#6}{_{\IfValueTF{#3}{#3}{#6}}}{\IfValueTF{#3}{_{#3}}{}}\Bra*{\IfValueTF{#4}{#4}{#2}}%
    \IfValueTF{#6}{_{\IfValueTF{#5}{#5}{#6}}}{\IfValueTF{#5}{_{#5}}{}}%
  }
}
\newcommand\matp[1]{\begin{pmatrix}#1\end{pmatrix}}
\newcommand\matb[1]{\begin{bmatrix}#1\end{vmatrix}}
\newcommand*\ee{\mathrm{e}}
\newcommand*\up{\uparrow}
\newcommand*\down{\downarrow}
\newcommand*\R{\mathbb{R}}
\NewDocumentCommand\cnot{o}{\mathrm{CNOT}\IfValueTF{#1}{_{#1}}{}}
\NewDocumentCommand\Swap{o}{\mathrm{SWAP}\IfValueTF{#1}{_{#1}}{}}
\NewDocumentCommand\order{o}{\mathcal{O}\IfValueTF{#1}{\pa{#1}}{}}
\NewDocumentCommand\Time{o}{\bm{\mathrm{TIME}}\IfValueTF{#1}{\pa{#1}}{}}
\NewDocumentCommand\Space{o}{\bm{\mathrm{SPACE}}\IfValueTF{#1}{\pa{#1}}{}}
\newcommand*\eqspace{\mathrel{\phantom{=}}}
\renewcommand\url[1]{\href{https://\detokenize{#1}}{\detokenize{#1}}}%
\renewcommand\doi[1]{\href{https://doi.org/\detokenize{#1}}{\detokenize{#1}}}%
\def\bstctlcite{\@ifnextchar[{\@bstctlcite}{\@bstctlcite[@auxout]}}
\def\@bstctlcite[#1]#2{\@bsphack
  \@for\@citeb:=#2\do{%
    \edef\@citeb{\expandafter\@firstofone\@citeb}%
    \if@filesw\immediate\write\csname #1\endcsname{\string\citation{\@citeb}}\fi}%
  \@esphack}
\newcommand{\lcref}[1]{\lcnamecref{#1}~\labelcref{#1}}
\def\lcfirstnamecrefs#1,#2\@nil{\lcnamecrefs{#1}}
\newcommand{\lcfirstnamecref}[1]{\lcfirstnamecrefs #1,\@nil}
\newcommand{\lcrefs}[1]{\lcfirstnamecref{#1}~\labelcref{#1}}
\let\lcref\cref
\let\lcrefs\cref
\begin{document}

\bstctlcite{config}

\makeatletter
\let\stashsmartcomma\sm@rtcomma
\let\sm@rtcomma,
\makeatletter

\title{Digital quantum simulation of the BCS model with a central-spin-like quantum processor}
\author{Jannis Ruh}
\email[]{jannis.ruh@uni-konstanz.de}
\author{Regina Finsterhoelzl}
\email[]{regina.finsterhoelzl@uni-konstanz.de}
\author{Guido Burkard}
\email[]{guido.burkard@uni-konstanz.de}
\affiliation{Department of Physics, University of Konstanz, D-78457 Konstanz, Germany}


\begin{abstract}
The simulation of quantum systems is one of the most promising applications of quantum computers. In this paper we present a quantum algorithm to perform digital quantum simulations of the (reduced) \ac{bcs} model on a quantum register with a star shaped connectivity map, as it is, e.g., featured by color centers in diamond. We show how to effectively translate the problem onto the quantum hardware and implement the algorithm using only the native interactions between the qubits. Furthermore we discuss the complexity of the circuit. We use the algorithm to simulate the dynamics of the \acs{bcs} model by subjecting its mean-field ground state to a time-dependent perturbation. The quantum simulation algorithm is studied using a classical simulation.
\end{abstract}

\maketitle

\makeatletter
\let\sm@rtcomma\stashsmartcomma
\makeatother

\section{Introduction}

The current state of quantum computing hardware platforms has been termed the era of \ac{nisq} computers \cite{preskill_nisq}, thereby referring to their limitations due to gate errors and decoherence effects. However, recent rapid developments may soon lead to the demonstration of advantages of useful quantum or hybrid algorithms over pure classical algorithms \cite{arute_quantum_supremacy,mohseni_commercialization}.
Quantum algorithms \cite{montanaro_overview} have a broad area of applications, from the generalized
Shor algorithm for the solution of the hidden subgroup problem \cite{shor_algorithm, lomonaco_hidden_subgroup} and quantum approximate optimization \cite{farhi_qaoa, farhi_qaoa_supremacy} to the simulation of real quantum systems \cite{lloyd_simulating_universality, abrams_fermi_simulation, aspuru_molecular_energies, haah_local_hamiltonian_simulation, hastings_improve_jordan_wigner, ortiz_fermionic_simulations, wecker_fermionic_simulations, wecker_practical_variational_algorithm, wecker_simulation_gate_count}. The goal of these algorithms is to solve problems whose high computational cost makes them hard or even impossible to solve with classical hardware. Particularly the simulation of quantum systems is among these problems due to the exponentially large dimension of the state space \cite{feynman_physics_with_computers}. Since the currently available quantum hardware platforms are limited, it is important to develop implementations of quantum algorithms that make optimal use of the available hardware. To achieve this, the algorithms can be aligned with the structure of the quantum processor, i.e., with the coupling map which describes the possible connections between the qubits. Because of the limited number of available qubits and the need to protect them against decoherence and error-prone gates, it is desirable to minimize the number of operations that are required to translate the quantum algorithm to the hardware \cite{gottesman_error_correction, aharonov_fault_tolerant, bravyi_lattice_code, dennis_topological_memory, fowler_surface_code, shor_error_code}.

As small quantum systems only require a limited number of logical qubits, their simulation on the current \ac{nisq} devices has already been demonstrated for very small systems \cite{omalley_simulation_experiment}. There already exist many quantum algorithms to perform such tasks \cite{lloyd_simulating_universality, abrams_fermi_simulation, aspuru_molecular_energies, haah_local_hamiltonian_simulation, hastings_improve_jordan_wigner, ortiz_fermionic_simulations, wecker_fermionic_simulations, wecker_practical_variational_algorithm, wecker_simulation_gate_count}. For instance,  \cite{wecker_fermionic_simulations} presents an algorithm that may be used to analyze the ground state and phase diagram of the Hubbard model. However, most of the algorithms do not consider any restrictions given by the structure of the quantum hardware. This may cause the transpilation to be costly in terms of additionally needed gates.

In this paper, we present an implementation of the quantum simulation of the \acf{bcs}
model for superconductivity. Our implementation is an example of a Hamiltonian simulation, where the quantum time evolution of a system is simulated. We will restrict the physical system to the space of Cooper pairs, which enables us to map the system efficiently onto a spin system with $\order[1]$. This improves the performance of the algorithm, however, it also implies that the presented quantum circuit effectively simulates a spin model and not a fermionic model. To simulate the whole fermionic system one has to use a fermionic mapping such as the Jordan-Wigner mapping \cite{tranter_bravyi_kitaev_transformation}. While there exist analytical solutions for a time independent system \cite{yuzbashyan_dynamics}, our numerical quantum algorithm is applicable to the simulation of time-dependent problems and can be extended, by using trotterization techniques, to include perturbation terms. The error of the algorithm is only of numerical nature, which can, theoretically, be reduced to be arbitrarily small. This is in contrast to analytical approximations. We restrict ourselves to the state space of paired electrons, the Cooper pairs. The algorithm is tailored to a quantum computer with a coupling map based on a \ac{css}. Such a quantum computer can, for instance, be realized with a spin-qubit register consisting of a nitrogen-vacancy defect in diamond \cite{childress_dynamics_coupled_electron_nuclear_spins_in_diamond,neumann_multipartite_entanglement_in_diamond,jiang_readout_in_diamond,dutt_register_in_diamond,steiner_optical_readout_in_diamond,smeltzer_control_of_spins_in_diamond,waldherr_error_correction_in_diamond,dolde_entanglement_in_diamond,kalb_memory_in_diamond,pezzagna_summary_computing_in_diamond,wrachtrup_computing_in_diamond,doherty_summary_nv_in_diamond,robledo_readout_in_diamond,finsterhoelzl2022,Vorobyov2022}. In addition to the simulation of the BCS model, the proposed algorithm offers an efficient implementation for multi-qubit gates that are double products of two-qubit gates on a \ac{css}-like quantum register.
\begin{figure}[hbt!]
\center
\includegraphics{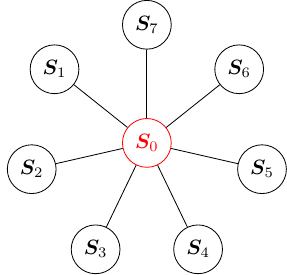}
\caption{The coupling map of a quantum register based on a \acl{css} with eight qubits. The central qubit ($\bm{S}_{0}$) is connected to all other qubits ($\bm{S}_{1}, \ldots, \bm{S}_{7}$), while the other qubits are not connected with each other.}
\label{fig:nv sketch}
\end{figure}

The paper has the following structure: In \lcref{sec:The models}, we introduce the physical model of a
\ac{bcs} superconductor, the simulated quantum system. Next we establish the connectivity map of quantum computer based on a \ac{css}. In the \lcrefs{sec:Mapping on a css,sec:Mapping on a quantum computer}, we show how to decompose the \ac{bcs} Hamiltonian into \ac{css}-like Hamiltonians and perform the mapping of the physical problem onto a quantum computer. \Cref{sec:Simulation} describes the quantum algorithm and in \lcref{sec:Results} we present our numerical results, where we simulate the time evolution of the mean-field ground state using a simulated quantum computer. We perform a quench, i.e., an abrupt parameter change in time, as a possible application of the algorithm, and discuss how the algorithm can be improved.

\section{The model}  \label{sec:The models}

The \ac{bcs} theory was introduced by J. Bardeen, L. N. Cooper, and J. R. Schrieffer to describe the phenomenon of
superconductivity through the pairing of electrons in a metal \cite{bcs_theory_of_superconductivity}. In the case of discrete states (e.g., in metallic grains) where the level spacing is of the order of the superconducting energy gap, a reduced BCS model can be used \cite{delft_spectroscopy_in_ultrasmall_grains,delft_superconductivity_in_ultrasmall_grains_intro_richardsons_solution,dukelsky_solvable_richardson_gaudin_models}. The Hamiltonian can then be written in the form
\cite{yuzbashyan_dynamics,delft_superconductivity_in_ultrasmall_grains_intro_richardsons_solution}
\begin{equation}  \label{eq:base hamiltonian}
  H_{\mathrm{BCS}} = \sum_{j=0}^{n-1} \sum_{\sigma = \up, \down} \epsilon_{j}
  c^{\dagger}_{j\sigma} c_{j\sigma} - g \sum_{j,k=0}^{n-1} c^{\dagger}_{j\up}
  c^{\dagger}_{j\down} c_{k\down} c_{k\up} \, ,
\end{equation}
where the first term corresponds to the single-particle Hamiltonian with fermionic operators
$c^{\dagger}_{j\sigma}$ and $c_{j\sigma}$ describing the creation and annihilation of
electrons in orbital $j$ with energy $\epsilon_{j}$ and spin $\sigma$, respectively. The second
term describes an effective pairwise interaction between the electrons where we assume a constant and energy-level independent coupling strength $g$. This coupling arises as the result of a perturbative description of the interaction between the electrons and phonons
\cite{anderson_dirty_bcs}. The pairing takes place between states of equal energy but antiparallel spins, i.e., between $\ket{j, \up}$ and $\ket{j,
\down}$, which occurs in a system with time reversal
invariance where the single-particle energy levels $\epsilon_{j}$ are only degenerate
with respect to the spin \cite{yuzbashyan_dynamics} or if the basis states are real wave functions. The number $n$ of energy orbitals is assumed to be finite, for example, as in models that describe superconductivity in ultrasmall metallic grains with an energy cutoff \cite{braun_fixed,delft_superconductivity_in_ultrasmall_grains_intro_richardsons_solution}. Under the assumption of constant parameters $g$ and $\epsilon_{j}$, the Hamiltonian in \cref{eq:base hamiltonian} is a well-described integrable model \cite{cambiaggio_integrability_pairing}.
The \ac{bcs} model \cref{eq:base hamiltonian} can be used to calculate the superconducting energy gap $\Delta$, see 
\cref{sec:Mean-field ground state}.
Analytical mean-field solutions, derived with algebraic separation of variables methods \cite{sklyanin_separation_of_variables, kutznetsov_separation_of_variables, sklyanin_gaudin_models}, exist for its dynamics \cite{yuzbashyan_dynamics}. In contrast to that, the quantum algorithm to be proposed here is able to simulate the \ac{bcs} system with time-dependent parameters and can be easily extended to include perturbation terms.

A \acl{css} can be described by a Hamiltonian of the following form \cite{froehling_central_spin}
\begin{equation}
  H_{\text{CSS}} = \sum_{j = 1}^{n-1} J_{j} \bm{S}_{0} \cdot \bm{S}_{j} + B \sum_{j =
  0}^{n-1} \mu_{j} S^{z}_{j} \,,  \label{eq:central spin}
\end{equation}
where $\bm{S}_{j} = \pa{S^{x}_{j}, S^{y}_{j}, S^{z}_{j}}^{T}$ are spin-$\frac{1}{2}$ operators, defined via the matrix representation $\pa{S^{x}_{j}, S^{y}_{j}, S^{z}_{j}}^{T} = \frac{\hbar}{2}\pa{\sigma^{x}, \sigma^{y}, \sigma^{z}}^{T}$, with the standard Pauli matrices
\begin{equation}
  \sigma^{x} = \matp{0&1\\1&0} \,,\quad \sigma^{y} = \matp{0&-i\\i&0} \,,\quad \sigma^{z} = \matp{1&0\\0&-1} \,. \label{eq:Pauli matrices}
\end{equation}
The first
term on the right-hand side of \cref{eq:central spin} represents the three-dimensional Heisenberg interaction between the spins, where the sign of the
coupling constant $J_{j}$ determines whether the interaction is ferromagnetic or
antiferromagnetic. The interaction only appears between one central spin $\bm{S}_0$ with the surrounding spins. Thus, the connectivity map of a quantum register
based on a \ac{css} is star-shaped as shown in \cref{fig:nv sketch}. Native two-qubit gates only exist between the central spin and the surrounding spins. This is
in contrast to ideal quantum computers, where an all-to-all connectivity is assumed. The second term in \cref{eq:central spin} describes the
Zeeman interaction with a magnetic field of strength $B$ and
coupling constants $\mu_{j} = g_{j} \mu_{\mathrm{B}}/\hbar$
with the Lande factor $g_{j}$ and the Bohr magneton $\mu_{\mathrm{B}}$.

\section{Mapping onto a central spin system} \label{sec:Mapping on a css}

The total Hilbert space of the considered system in \cref{eq:base hamiltonian} is given
by $\mathcal{H} = \bigotimes_{j=0}^{n-1} \mathcal{H}_{j}= \Span\bk{\bigotimes_{j=0}^{n-1}\mathcal{B}_{j}}$ with $\mathcal{H}_{j} =
\Span\mathcal{B}_j$ and the bases $\mathcal{B}_{j} = \bc{\Ket{0},
  c^{\dagger}_{j\down}\Ket{0}, c^{\dagger}_{j\up}\Ket{0},
c^{\dagger}_{j\up}c^{\dagger}_{j\down}\Ket{0}}$. In equal manner, we define the space of
Cooper-pairs $\mathcal{H}_{\mathrm{C}} = \bigotimes_{j=0}^{n-1}
\mathcal{H}_{\mathrm{C},j}$ with $\mathcal{H}_{\mathrm{C}, j} =
\Span\mathcal{B}_{\mathrm{C}, j}$ and $\mathcal{B}_{\mathrm{C}, j} = \bc{\Ket{0},
c^{\dagger}_{j\down}c^{\dagger}_{j\up}\Ket{0}}$. The orthogonal complements of
$\mathcal{H}_{\mathrm{C}, j}$ and $\mathcal{H}_{\mathrm{C}}$ are defined via
$\mathcal{H}^{\perp}_{\mathrm{C}, j} = \Span \bk{\mathcal{B}_{j} \backslash
  \mathcal{B}_{\mathrm{C}, j}}$ and $\mathcal{H}^{\perp}_{\mathrm{C}} = \Span
  \bk{\pa{\bigotimes_{j=0}^{n-1}\mathcal{B}_{j}} \middle\backslash
  \pa{\bigotimes_{j=0}^{n-1}\mathcal{B}_{\mathrm{C}, j}}}$, respectively.

In order to bring the \ac{bcs} Hamlitonian in connection with the Hamiltonian of a
\acl{css}, we define the operators:
\begin{align}
  K^{z}_{j} &= \frac{1 - c^{\dagger}_{j\up}c_{j\up} -
  c^{\dagger}_{j\down}c_{j\down}}{2} \,,\\
  K^{+}_{j} &= c^{\dagger}_{j\up}c^{\dagger}_{j\down} \,,\\
  K^{-}_{j} &= c_{j\down}c_{j\up} \,.
\end{align}
$K^{z}_{j}$ denotes the (shifted, negative) number operator for the  $j\text{th}$ orbital, and $K^{+}_{j}$
($K^{-}_{j}$) creates (annihilates) a Cooper-pair in the $j\text{th}$ orbital. With
these operators one can rewrite the Hamiltonian in \cref{eq:base hamiltonian} into the following form
\begin{equation}
  H_{\text{BCS}} = -\sum_{j = 0}^{n - 1} 2 \epsilon_{j} K^{z}_{j} - g \sum_{j,k = 0}^{n-1}
  K^{+}_{j} K^{-}_{k} + \mathrm{const}\,.
\end{equation}
The operators $K_j^{\alpha}$ effectively represent spin-$\frac{1}{2}$ operators.
Let us define the remaining components $K^{x}_{j} = \frac{K^{+}_{j} + K^{-}_{j}}{2}$ and
$K^{y}_{j} = \frac{K^{-}_{j} - K^{+}_{j}}{2i}$ and set $\bm{K}_{j} = \pa{K^{x}_{j},
K^{y}_{j}, K^{z}_{j}}^{T}$. One can easily show that these operators fulfill $\bm{K}_{j}
\mathcal{H}^{\perp}_{\mathrm{C}, j} = 0$ and $\bm{K}_{j} \mathcal{H}_{\mathrm{C}, j}
\subseteq \mathcal{H}_{\mathrm{C}, j}$. Moreover if we map $\Ket{0} \to
\pa{1, 0}^{T}$ and
$\hat{c}^{\dagger}_{j\down}\hat{c}^{\dagger}_{j\up}\Ket{0} \to
\pa{0, 1}^{T}$ we find the mapping
$\bm{K}_{j}\vert_{\mathcal{H}_{\mathrm{C}, j}} \to \frac{1}{2} \bm{\sigma}$, where
$\bm{\sigma} = \pa{\sigma^{x}, \sigma^{y}, \sigma^{z}}^{T}$ represents the vector of Pauli matrices, as in \cref{eq:Pauli matrices}.

Before we map the \ac{bcs} problem on a \ac{css} based quantum computer, we
introduce the Gaudin
Hamiltonians \cite{sklyanin_gaudin_models}, a family of operators similar to a \ac{css} Hamiltonian. For simplicity, we assume non-degenerate
energy levels $\epsilon_{j}$, however the following can also be generalized via introducing summed-spin operators $\bm{\tilde{K}}_{j} = \sum_{k = 0}^{n - 1} \bm{K}_k \delta_{\epsilon_j, \epsilon_k}$ and describing the \ac{bcs} Hamiltonian with these operators. For $q \in \bc{0, \ldots,
n-1}$ we define the Gaudin Hamiltonians as
\begin{equation}
  H_{q} = 2 \sum_{\substack{j=0\\j\neq q}}^{n-1} \frac{\bm{K}_{q} \cdot
  \bm{K}_{j}}{\epsilon_{q} - \epsilon_{j}} - \gamma K^{z}_{q} \,,
\end{equation}
with the free parameter $\gamma$. These Hamiltonians can be seen as special case of the \ac{css}
Hamiltonian in \cref{eq:central
spin},  if we assume either constant $\mu_{j} = \mu$, for $j = 1, \ldots, n - 1$, with a
conserved total spin, or $\mu_{j} \ll \mu_{0}$ for $j = 1, \ldots, n - 1$. The
central spin is at index $q$. If we choose $\gamma = - \frac{2}{g}$, the Gaudin
Hamiltonians represent a set of invariants with respect to the \ac{bcs}
Hamiltonian \cite{cambiaggio_integrability_pairing}, i.e, $\bk{H_{\mathrm{BCS}}, H_{q}} = 0$ and $\bk{H_{p},
H_{q}} = 0$. Moreover we can construct the \ac{bcs} Hamiltonian with them,
\begin{equation}
\label{eq:starting Hamiltonian} H_{\mathrm{BCS}} = -g \sum_{q=0}^{n-1}\epsilon_{q} H_{q}
+ g L^{z} + g \pa{L^{z}}^{2} \,,
\end{equation} where we used the total spin operator
$\bm{L}$ that is defined as $\bm{L} = \sum_{j=0}^{n-1} \bm{K}_{j}$. Note that the Gaudin Hamiltonians also fulfill $-\gamma L^{z} = \sum_{q=0}^{n-1} H_{q}$. If the energy levels $\epsilon_{j}$ are degenerate, there is an additional term in the Hamiltonian in \cref{eq:starting Hamiltonian} (see \lcref{sec:Properties of the Gaudin Hamiltonians} for more details).\cite{frenkel_gaudin_commutator_proof, sklyanin_gaudin_models}

\section{Mapping onto a quantum computer}  \label{sec:Mapping on a quantum computer}

We restrict ourselves to the Hilbert space $\mathcal{H}_{\mathrm{C}}$ of Cooper-pairs. This enables us
to map the operators $\bm{K}_j$ with a resource overhead of order $\mathcal{O}\pa{1}$ onto a quantum computer. This stands in
contrast to other mappings, e.g., the Jordan-Wigner mapping, which maps creation and
annihilation operators to Pauli operators with an overhead of order $\mathcal{O}\pa{n}$ where $n$ is the number of
qubits, or the Bravyi-Kitaev mapping with a mapping order of $\mathcal{O}\pa{\log
n}$ \cite{tranter_bravyi_kitaev_transformation}.
As demonstrated in \lcref{sec:Mapping on a css} the space $\mathcal{H}_{\mathrm{C}}$ is
invariant under the action of the operators $\bm{K}_j$. This implies that
$\mathcal{H}_{\mathrm{C}}$ is also invariant under $H_{\mathrm{BCS}}$ (\cref{eq:base hamiltonian}), since
$H_{\mathrm{BCS}}$ can be expressed through the spin operators as in \cref{eq:starting
Hamiltonian}. Moreover, since the Hamiltonian $H_{\mathrm{BCS}}$ is hermitian, it is block diagonal with respect to $\mathcal{H}_{\mathrm{C}}$ and its complement $\mathcal{H}^{\perp}_{\mathrm{C}}$. From a physical point of view this is caused by the fact that the interaction term in \cref{eq:base hamiltonian} only rearranges the energy levels that are occupied by Cooper pairs, it does not break up or create any Cooper pairs into or out of single occupied energy levels, respectively. The block diagonal form enables us to consider the Hilbert space $\mathcal{H}_{\mathrm{C}}$ as a self-contained system.

The mapping onto qubits is done via
\begin{equation}
  \prod_{j = 0}^{n-1} \pa{K^{+}_{j}}^{\beta_{i}} \ket{0} \to \ket{\{q_{j}\}} \,,
\end{equation}
with $q_{j} = \beta_{j} \in \bc{0, 1}$ and
\begin{equation}
  \bm{K}_{j} \to \frac{1}{2} \bm{\sigma}_j \,,
\end{equation}
where $\ket{\{q_{j}\}} = \ket{q_{n-1} \ldots q_{0}}$ represents a basis state of the qubits on the quantum computer. Here we make use of the possibility to represent the operators $\bm{K}_{j}$ with the Pauli operators in
$\mathcal{H}_{\mathrm{C}}$, as described in \lcref{sec:Mapping on a css}. This mapping is similar to the proposed mapping in \cite{wu_pairing_models_simulation}, where different kinds of pairings are investigated. Note that we do not have to consider any parity signs
caused by fermionic anti-commutators, since the fermionic creation and annihilation operators always appear pairwise. This makes the proposed mapping more efficient than the mapping of single creation and
annihilation operators.

\section{Simulation} \label{sec:Simulation}

Let us first consider the case with constant parameters $\epsilon_j$ and $g$, and without perturbation terms. The time evolution operator at time $t$ of the \ac{bcs} Hamiltonian in \cref{eq:starting Hamiltonian}, mapped onto a quantum computer as
described in \lcref{sec:Mapping on a quantum computer}, is given by
\begin{subequations}
\begin{align}
  U\pa{t} &= \ee^{-i\frac{t}{\hbar}H_{\mathrm{BCS}}}\\
          &= \pa{\prod_{q=0}^{n-1} \ee^{i\frac{t}{\hbar}g\epsilon_{q}H_{q}}}\pa{\prod_{\substack{j, k = 0\\j \neq k}}^{n-1}
        \ee^{-i\frac{t}{\hbar}g\sigma^{z}_{j}\sigma^{z}_{k}/4}}\nonumber\\
        &\eqspace\times \pa{\prod_{j=0}^{n-1} \ee^{-i\frac{t}{\hbar}g \sigma^{z}_{j} /
    2}}\,,  \label{eq:evolution splitting}
\end{align}
\end{subequations}
with the Gaudin Hamiltonions
\begin{equation}
  H_{q} = \sum_{\substack{j=0\\j\neq q}}^{n-1} \frac{\bm{\sigma}_{q} \cdot
  \bm{\sigma}_{j}}{2\left(\epsilon_{q} - \epsilon_{j}\right)} + \frac{\sigma^{z}_{q}}{g}
  \,. \label{eq:Gaudin mit sigma}
\end{equation} 
Here we made use of the fact that in the BCS Hamiltonian, \cref{eq:starting Hamiltonian},
every term commutes with all the other terms
(\lcref{sec:Properties of the Gaudin Hamiltonians}). Please note that we neglect a constant phase of $\frac{-n g}{2} + \sum_{j=0}^{n-1}
\epsilon_{j}$ with respect to the Hamiltonian in \cref{eq:base hamiltonian}. This phase
would have to be taken into account, for example, if one performs a phase estimation \cite{svore_fast_phase_estimation}
to calculate the eigenvalues of the \ac{bcs} Hamiltonian and one is interested in the absolute values of the energies.

To implement the exponential operators in \cref{eq:evolution splitting}, we define the operators
\begin{equation}
U_{\mathrm{sH}, jk}\pa{\alpha}
= \ee^{-i\alpha \bm{\sigma}_{j} \cdot \bm{\sigma}_{k}} \,, \quad U_{\mathrm{sI},
jk}\pa{\alpha} = \ee^{-i\alpha \sigma_{j}^{z} \sigma_{k}^{z}} \label{eq:evolution helpers}
\end{equation} for a parameter $\alpha \in \R$. The exponent of the first operator $U_{\mathrm{sH,
jk}}\pa{\alpha}$ describes a Heisenberg-type interaction, while the exponent of
$U_{\mathrm{sI}, jk}\pa{\alpha}$ describes an Ising-type interaction. We need to implement these two-qubit operators on the quantum processor. For this, we briefly repeat the matrix representations of some standard gates. The Pauli gates are defined, accordingly to the Pauli matrices in \cref{eq:Pauli matrices}, as
$X=\sigma_x$, $Y=\sigma_y$,
$Z=\sigma_z$.
Some useful roots of the Pauli gates are
\begin{align}
  X^{1/2} &= \frac{1}{2}\matp{1+i&1-i\\1-i&1+i} = \pa{X^{-1/2}}^{*} \,,\\
  S &= \matp{1&0\\0&i} = \sqrt{Z}\,.
\end{align}
The Hadamard gate and a rotation around the $z$-axis are given by
\begin{align}
  H &= \frac{1}{2}\matp{1&1\\1&-1} \,,\\
  R_{z}\pa{\lambda} &= \matp{\ee^{-i \lambda / 2}&0\\0&\ee^{i \lambda / 2}} \,.
\end{align}
The controlled-not gate and the swap gate are defined as
\begin{align}
  \mathrm{CNOT}_{jk} &= \matp{1&0&0&0\\0&0&0&1\\0&0&1&0\\0&1&0&0} \,,\\
  \mathrm{SWAP}_{jk} &= \matp{1&0&0&0\\0&0&1&0\\0&1&0&0\\0&0&0&1} \,,
\end{align}
where the $j\mathrm{th}$ qubit controls the $k\mathrm{th}$ qubit, assuming the basis
\begin{align}
    \ket{0}_k\ket{0}_j &= \ket{00} \leftrightarrow \pa{1, 0, 0, 0}^{T} \,,\\
    \ket{0}_k\ket{1}_j &= \ket{01} \leftrightarrow \pa{0, 1, 0, 0}^{T} \,,\\
    \ket{1}_k\ket{0}_j &= \ket{10} \leftrightarrow \pa{0, 0, 1, 0}^{T} \,,\\
    \ket{1}_k\ket{1}_j &= \ket{11} \leftrightarrow \pa{0, 0, 0, 1}^{T} \,.
\end{align}

With these gates, it is possible to implement $U_{\mathrm{sH}, jk}\pa{\alpha}$ and
$U_{\mathrm{sI}, jk}\pa{\alpha}$ in \cref{eq:evolution helpers} as shown in the \cref{qc:Implementation of the
Heisenberg time evolution operator,qc:Implementation of the Ising time evolution
operator}.
\Cref{qc:Implementation of the
Heisenberg time evolution operator} shows a general approach to construct gates with an action $\ee^{iA}$ for a
hermitian operator $A$ by implementing the basis transformation from the eigenbasis of $A$ to the $z$-basis, followed by $z$-rotations according to the eigenvalues of $A$ and a back transformation from the $z$-basis to the eigenbasis. For example, in the case of $U_{\mathrm{sH}, jk}$, the operator
$\bm{\sigma}_{j} \cdot \bm{\sigma}_{k}$ is diagonal in the Bell basis
\begin{equation}
    \ket{\phi_{\pm}} = \frac{\ket{00} \pm \ket{11}}{\sqrt{2}} \,,\quad \ket{\psi_{\pm}} = \frac{\ket{01} \pm \ket{10}}{\sqrt{2}} \,, \label{eq:bell basis}
\end{equation}
with the eigenvalue $+1$ for $\ket{\phi_{\pm}}, \ket{\psi_{+}}$ and $-3$ for $\ket{\psi_{-}}$. Therefore we map, as described in \cref{qc:Implementation of the Heisenberg time evolution operator},
\begin{subequations} (see also \cite{burkard_manuscript})
\begin{align}
& \pa{\ket{\phi_+}, \ket{\phi_-}, \ket{\psi_+}, \ket{\psi_-}}\\
&\mapsto \pa{\ket{00}, \ket{01}, \ket{10}, \ket{11}}\\
&\mapsto \pa{\ee^{-i\alpha}\ket{00}, \ee^{-i\alpha}\ket{01}, \ee^{-i\alpha}\ket{10}, \ee^{3i\alpha}\ket{11}}\\
&\mapsto \pa{\ee^{-i\alpha}\ket{\phi_+}, \ee^{-i\alpha}\ket{\phi_-}, \ee^{-i\alpha}\ket{\psi_+}, \ee^{3i\alpha}\ket{\psi_-}} \,.
\end{align}
\end{subequations}
$U_{\mathrm{sI}, jk}$ is already diagonal in the $z$-basis, so we can directly perform the $z$-rotations, as described in \cref{qc:Implementation of the Ising time evolution operator},
\begin{subequations}
\begin{align}
&\pa{\ket{00}, \ket{01}, \ket{10}, \ket{11}}\\
& \to \pa{\ee^{-i\alpha}\ket{00}, \ee^{i\alpha}\ket{01}, \ee^{i\alpha}\ket{10}, \ee^{-i\alpha}\ket{11}} \,.
\end{align}
\end{subequations}
The evolution of
\begin{equation}
  U\pa{H_{q}, t} 
  = \ee^{i\frac{t}{\hbar}g\epsilon_{q}H_{q}} \label{eq:time Gaudin}
\end{equation}
can be approximated using the Trotter-Suzuki formulas \cite{suzuki_trotter, wiebe_trotter_suzuki}, which factorize the exponential operator. The first-order and second-order Trotter-Suzuki formulas for two non-commuting operators $A$ and $B$ are given by, respectively,
\begin{align}
  \ee^{it\pa{A+B}} &= \lim_{m \to \infty}
  \pa{\ee^{iA \Delta t}\ee^{iB \Delta t}}^{m} \,, \label{eq:trotter one}\\
  \ee^{it\pa{A+B}} &= \lim_{m \to \infty}
  \pa{\ee^{iB \Delta t/2}\ee^{iA \Delta t}\ee^{iB \Delta t/2}}^{m} \,, \label{eq:trotter two}
\end{align}
with the discrete time step $\Delta t =  \frac{t}{m}$. For finite $m$ the errors $\varepsilon_{1}, \varepsilon_{2}$, for the first and second order, respectively, have the upper bound
\begin{align}
  \varepsilon_{1} &\leq \frac{t^{2}}{2 m} \norm{\bk{A, B}} + \mathcal{O}\pa{\frac{t^{3}}{m^{3}}} \label{eq:trotter_first_order} \,,\\
  \varepsilon_{2} &\leq \frac{t^{3}}{12 m^{2}} \norm{\bk{A +
  \frac{B}{2}, \bk{A, B}}} + \mathcal{O}\pa{\frac{t^{4}}{m^{4}}} \,.
\end{align}
Using the formulas in \cref{eq:trotter one,eq:trotter two}, we decompose $U\pa{H_{q}, t}$ into terms of
single rotations around the $z$-axis and $U_{\mathrm{sH}, qj}$ (\cref{eq:evolution helpers}). With this given, the time evolution $U\pa{t}$ is easily implemented on an ideal quantum register
with an all-to-all connectivity. However, such ideal quantum computers are not realistic.

We consider the case of a \ac{css} quantum register with a star-shaped connectivity map
as described in \lcref{sec:The models}. Let $q^{*}_{0}$ be the central qubit that couples to all
other qubits. To implement the time evolution, we make use of the swap gate. The algorithm is visualized in
\cref{qc:First layout}. First, we implement $U\pa{H_{0}, t}$ which only contains couplings with the central qubit $q^{*}_{0}$. Next we perform
a swap operation on the qubits $q^{*}_{0}$ and $q_{1}$. Now we can implement $U\pa{H_{1}, t}$
with adapted parameters as described in \cref{qc:First layout}b. Next we swap the states on the qubits $q^{*}_{0}$ and $q_{2}$ and
proceed in the same manner until we reach the last qubit. This procedure implements the
first term $\prod_{q=0}^{n-1}\ee^{i\frac{t}{\hbar}g\epsilon_{q}H_{q}}$ in \cref{eq:evolution splitting},
however with swapped states at the end. This will be fixed with the second term in
\cref{eq:evolution splitting}. For this, we define the operators $U\pa{I_{q}, t} =
\prod_{j = 0; j \neq q}^{n-1}\ee^{-i\frac{t}{\hbar}g\sigma^{z}_{q}\sigma^{z}_{j}/4}$. With these operators we proceed
analog as with the Gaudin terms $U\pa{H_{q}, t}$, however starting with the $(n - 1)$th qubit,
i.e., starting with $U\pa{I_{n-1}, t}$, as described in \cref{qc:First layout}c.

While the exact total number of required gates depends on the given set of native gates, the complexity, i.e., the gate count of the algorithms with respect to the number of qubits $n$ is of great interest. $U\pa{H_{q},
t}$ is implemented using the Trotter-Suzuki formula by splitting the evolution into
$r_q\pa{t, \varepsilon, \bc{\epsilon_j}, g} \cdot n$ exponential operators, where the operators are either $U_{\mathrm{sH},
jk}\pa{\alpha}$ or single qubit rotations around the $z$-axis. The factor $r_q\pa{t, \varepsilon, \bc{\epsilon_j}, g}$ depends on the chosen Trotter-Suzuki
decomposition, where $\varepsilon$ is the error of the approximation. For example, it exists a $2p$th order Trotter-Suzuki decomposition where
\begin{equation}
 r_q\pa{t, \varepsilon, \bc{\epsilon_j}, g} \approx
\mathcal{O}\pa{\frac{p 25^{p}}{3^{p-1}}\sqrt[2p]{\frac{\pa{\Lambda_q\pa{\bc{\epsilon_j}, g} t}^{2p+1}}{\varepsilon}}} \label{eq:trotter complexity}
\end{equation}
can be reached \cite{suzuki_trotter, wiebe_trotter_suzuki, papageorgiou_trotter_suzuki, raeisi_circuit_design}. For the \ac{bcs} problem, one finds that $\Lambda_q\pa{\bc{\epsilon_j}, g} \hbar = \sum_{j=0}^{n-1} \abs{\frac{3 g \epsilon_q}{2\epsilon_{q} - \epsilon_{j}}} + \abs{\epsilon_q}$. However this value is only an estimate which provides an upper bound of the number of needed gates and smaller $r_q\pa{t, \varepsilon, \bc{\epsilon_j}, g}$ may be possible. It follows that the algorithm, as proposed in \cref{qc:First layout}, has a maximum circuit-size complexity of
$\mathcal{O}\pa{\max_q\pa{r_q\pa{t, \varepsilon, \bc{\epsilon_j}, g}} n^{2}}$, both in terms of single qubit and two
qubit gates. The circuit-depth complexity is of the same order. This means that, up to the factor $\max_q\pa{r_q\pa{t, \varepsilon, \bc{\epsilon_j}, g}}$, the complexity is quadratic in the number of qubits. However, the dependence of $\max_q\pa{r_q\pa{t, \varepsilon, \bc{\epsilon_j}, g}}$ on the system parameters $\epsilon_j$ and $g$ and the time $t$ is not trivial in general.
\begin{figure}
\center
\includegraphics{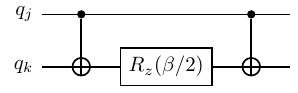}
\caption{Implementation of the Ising time
evolution operator $\ee^{-i\alpha \sigma_{j}^{z} \sigma_{k}^{z}}$. The rotation parameter is given by $\beta = 4
\alpha$. We do not need to perform a basis transformation, as in \cref{qc:Implementation of the Heisenberg time evolution operator}, since the evolution operator is already diagonal in the computational basis.}
\label{qc:Implementation of the Ising time evolution operator}
\end{figure}
\begin{figure*}[hbt!]
\center
\includegraphics{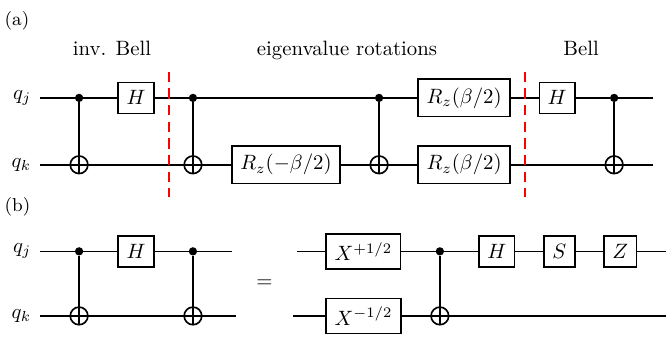}
  \captionShortcut[Implementation of the Heisenberg time evolution
  operator]{From \cite{burkard_manuscript}. (a) Implementation of the
  Heisenberg time evolution operator $\ee^{-i\alpha \bm{\sigma}_{j} \cdot
\bm{\sigma}_{k}}$. The rotation parameter is given by $\beta = 4 \alpha$. First the Bell basis, as defined in \cref{eq:bell basis} (the eigenvectors of $\bm{\sigma}_j \cdot \bm{\sigma}_k$), is mapped to the $z$ basis, then $z$-rotations, accordingly to the eigenvalues ($+1$ for $\ket{\phi_{\pm}}, \ket{\psi_{+}}$ and $-3$ for $\ket{\psi_{-}}$), are executed and in the end the $z$ basis is mapped back to the Bell basis. (b) Implementation of the first three gates in (a) to replace one $\mathrm{CNOT}$.}{qc}
\end{figure*}%

The algorithm demonstrates that a \ac{css} quantum registers represent a powerful platform when it
comes to the implementation of double products of two-qubit gates. Let us consider the operator
$f = \prod_{j \in M} \prod_{k \in S_j} f_{jk}$, where $f_{jk}$ is a unitary operator on
the qubits $j$ and $k$, for the tuples $\pa{M_j}_j, \pa{S_{j, k}}_k \subset \bc{0, \ldots, n - 1}$, where $j=1, \ldots, |M|$. This operator is a product of operators $\hat{S}_j = \prod_{k \in S_j} f_{jk}$, where each of these operators has effectively one central ``spin'' that needs to interact with all the other ``spins''. On a \ac{css}
quantum register, the operators $\hat{S}_j$ can be implemented successively by swapping the central qubit with the $j\mathrm{th}$ qubit in between and adapting the
parameters in an analog way as in \cref{qc:First layout}b. With this, the number of necessary swap gates for the implementation $f$ is of order $\mathcal{O}\pa{\abs{M}}$, where $|M|$ counts the number of times where the role of the central spin changes.
The trivial special case, $\abs{M} = 1$, can, for example, be used to simulate the central-spin system itself, which has application in solving nonlinear differential equations \cite{Geller2021}.
\begin{figure*}[hbt!]
  \center
  \includegraphics{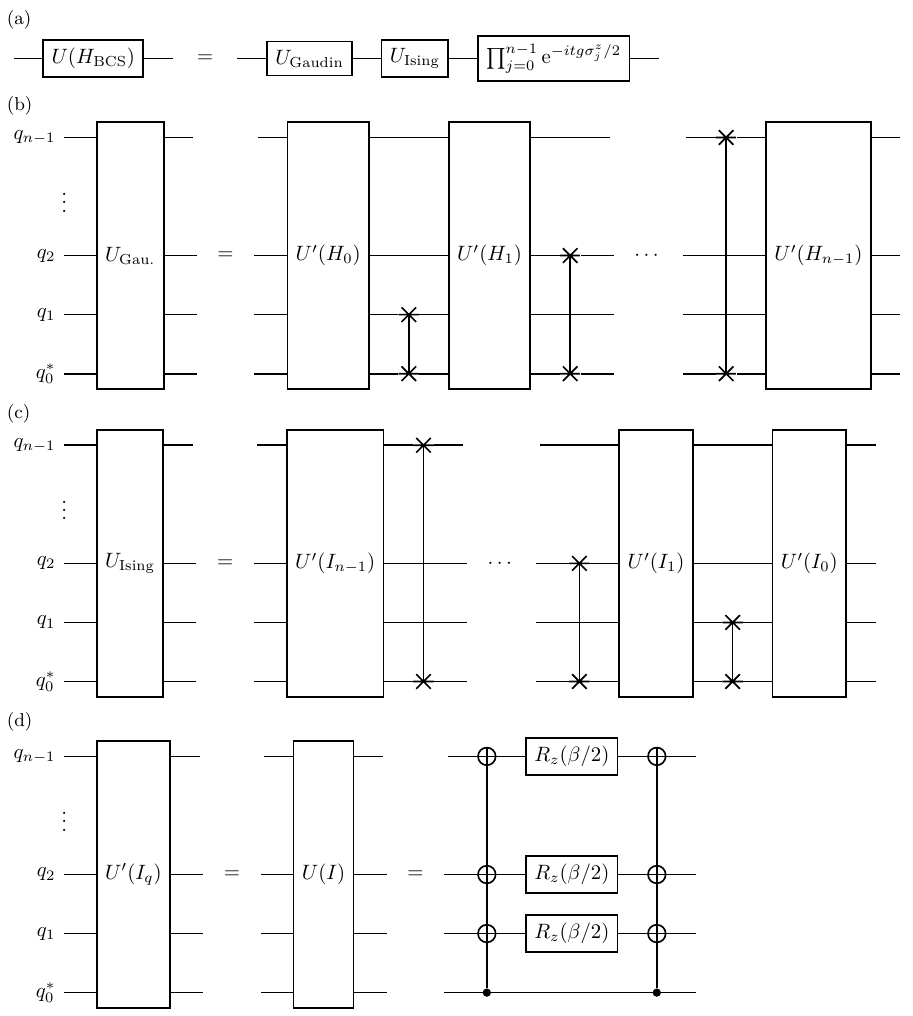}
  \captionShortcut[First layout]{Implementation of the \ac{bcs} time evolution $U\pa{t}
    = U\pa{H_{\mathrm{BCS}}}$ for constant parameters on a \ac{css} quantum register.
    The lowest wire represents the central qubit $q_{0}^{*}$. (a) Decomposition of the \ac{bcs} evolution into a Gaudin term, an Ising
    term, and some single qubit rotations. (b) Implementation of the
    Gaudin term (multiple Gaudin Hamiltonians). The operators $U'\pa{H_{j}} = U\pa{H_{0}, t}\bk{\bc{\epsilon_{j, k}}_k, g}$ are recursively defined via \cref{eq:time Gaudin} and $\bc{\epsilon_{j, k}}_k = h_{j, 0}\pa{\bc{\epsilon_{j-1, k}}_k}$ for $j > 0$ and $\bc{\epsilon_{0, k}}_k = \bc{\epsilon_{k}}_k$, where $h_{l, m}$ swaps the elements $a_l$ and $a_m$ in a tuple $\bc{a_k}$. $U\pa{H_{0}, t}$ can, for
    example, be implemented with the Trotter-Suzuki formulas.
    (c) The Ising term evolution. Note
    that this gate inverts the qubit permutations from the Gaudin gate.
    (d) Implementation of one of the Ising gates with $\beta = tg$.
    Note that the single-qubit gates do not depend on $q$.}{qc}
\end{figure*}

The algorithm for the time-independent \ac{bcs} Hamiltonian in \cref{eq:starting Hamiltonian} that we have shown above, can be easily expanded, by using the same 
 trotterization techniques that we have already used, to a more general 
time-dependent Hamiltonian including possible perturbations
\begin{equation}
  H\pa{t}= H_{\mathrm{BCS}}\pa{t} + H_{\mathrm{P}}\bk{\bc{\bm{K}_{j}}}\pa{t} \,,
\end{equation}
where the perturbation term $H_{\mathrm{P}}\bk{\bc{\bm{K}_{j}}}\pa{t}$ needs to be
expressible only using the spin operators $\bc{\bm{K}_{j}}$, so that the mapping in \cref{sec:Mapping on a quantum computer} is applicable. The time evolution operator is
given by the Dyson series
\begin{equation}
U\pa{t, t_{0}} = \mathcal{T}\bk{\ee^{-\frac{i}{\hbar} \int_{t_{0}}^{t} H\pa{t'} \diff{t'}}} \,,
\end{equation}
where $\mathcal{T}$ is the time ordering operator. To simulate the time evolution up to the time $t$, one
can discretize the total time $t - t_0$ into $m$ steps and split the time evolution operator as follows
\begin{equation}
  U\pa{t, t_{0}} = U\pa{t = t_{m}, t_{m-1}} \ldots U\pa{t_1, t_{0}} \,,
  \label{eq:discretize}
\end{equation}
where the operators $U\pa{t_{j}, t_{j-1}}$ can be approximated with
\begin{equation}
U\pa{t_{j}, t_{j-1}} \approx \ee^{-\frac{i}{\hbar} H\pa{t_{j-1}} \Delta
t_{j, j-1}} \,,
\end{equation}
if the chosen time difference $\Delta t_{j, j-1} = t_{j} - t_{j-1}$ is sufficiently small. The operators
$U\pa{t_{j}, t_{j-1}}$ can be approximated using the Trotter-Suzuki
decompositions and the
implementation for the \ac{bcs} evolution from the constant case.

\section{Results} \label{sec:Results}

In this section we present our numerical results for the simulation of the dynamics of the model and discuss further optimization strategies and application fields. The numerical calculations are performed with a simulated quantum computer. As a test for the proper function of the quantum simulation, we calculate the return probability, i.e, the probability  that the system after time $t$ (described by the state vector $\ket{\psi\pa{t}}$) has  returned to its initial state $\ket{\psi_0}$,
\begin{equation}
  \abs{\braket*{\psi_0}[\psi\pa{t}]}^{2} =
  \abs{\braket*{\psi_0}[\mathcal{T}\bk{\ee^{-\frac{i}{\hbar} \int_{t_0}^{t}
  H_{\mathrm{BCS}}\pa{t'} \diff{t'}}}][\psi_0]}^{2} \,. \label{eq:simulation quantity}
\end{equation}
Note that the return probability  equals the Loschmidt echo, which constitutes an important quantity in multiple contexts of the quantum many-body theory, for example, quantum chaos and nonequilibrium fluctuation theorems \cite{Heyl2018,adamov_loschmidt_echo}. 
Here, we assume a Hamiltonian $H_{\mathrm{BCS}}\pa{t}$ as in \cref{eq:starting Hamiltonian} with time-dependent parameters. As initial state, we use 
$\ket{\psi_0} = \ket{\mathrm{BCS}}$, where $\ket{\mathrm{BCS}}$ is the ground state of the mean-field
\ac{bcs} theory at time $t_0$, given as
\begin{equation}
  \ket{\mathrm{BCS}} = \prod_{j=0}^{n-1} \pa{u_{j} - v_{j} K_{j}^{+}} \ket{0} \,. \label{eq:BCS state}
\end{equation}
The parameters $u_j, v_j \in \mathbb{C}$ depend on the system parameters $\epsilon_k\pa{t_0}$ and $g\pa{t_0}$,
for details see \lcref{sec:Mean-field ground state} and \cite{tinkham_introduction_to_superconductivity}. Since $\ket{\mathrm{BCS}} \in \mathcal{H}_{\mathrm{C}}$, the problem is suitable for the algorithm presented above. For a constant Hamiltonian the state $\ket{\mathrm{BCS}}$ approximates the ground state for $n \to \infty$, implying that the return probability in \cref{eq:simulation quantity} approaches $1$. Because of its form of a product state, $\ket{\mathrm{BCS}}$ can be easily implemented using single qubit rotations.

In the presence of errors, a quantum simulation is not perfect; rather the simulation results in a mixed state, which can be described with a density matrix $\rho\pa{t}$. Therefore, instead of the return probability as in \cref{eq:simulation quantity}, we actually calculate
\begin{equation}
    \mathcal{R}_{\mathrm{mf}}\pa{t} = \braket*{0}[\rho_{\mathrm{mf}}\pa{t}][0] \,. \label{eq:mf return rate}
\end{equation}
The density matrix $\rho_{\mathrm{mf}}$ is the result of the quantum simulation, which consists of initializing the mean-field ground state $\ket{\mathrm{BCS}}$, performing the time evolution and inverting the mean-field ground state initialization. All these operations might be error-prone. In the optimal case, without any errors, the density matrix describes the following pure state
\begin{align}
    \rho_{\mathrm{mf}}^{\mathrm{opt}}\pa{t} &= \ketbra{\phi_{\mathrm{mf}}\pa{t}}[\phi_{\mathrm{mf}}\pa{t}] \,, \label{eq:mf opt a}\\
    \ket{\phi_{\mathrm{mf}}\pa{t}} &= \braket*{\psi_{\mathrm{mf}, 0}}[\psi_{\mathrm{mf}}\pa{t}] \ket{0} \nonumber\\
    &\eqspace + \sqrt{1 - \abs{\braket*{\psi_{\mathrm{mf}, 0}}[\psi_{\mathrm{mf}}\pa{t}]}^2} \ket{0^{\perp}} \,, \label{eq:mf opt b}
\end{align}
where $\ket{0^{\perp}}$ is a state orthogonal to $\ket{0}$. $\ket{\psi_{\mathrm{mf}, 0}}$ and $\ket{\psi_{\mathrm{mf}}\pa{t}}$ are the state $\ket{\mathrm{BCS}}$ and its time evolved state, respectively. In this optimal case, $\mathcal{R}_{\mathrm{mf}}\pa{t}$ equals the formula in \cref{eq:simulation quantity}.

In addition to the simulation of the mean-field ground state, we calculate the return probability for the exact ground state of the Hamiltonian in \cref{eq:starting Hamiltonian}. However, we do not implement the initialization of this state in the quantum algorithm; instead, we directly specify this state as initial state. This is only possible because we use a simulated quantum computer and not a real quantum device. The resulting quantity of the simulation is the return probability
\begin{equation}
    \mathcal{R}_{\mathrm{exact}}\pa{t} = \braket*{\psi_{\mathrm{exact}, 0}}[\rho_{\mathrm{exact}}\pa{t}][\psi_{\mathrm{exact}, 0}] \,, \label{eq:exact return rate}
\end{equation}
where, in the optimal case, without qubit and gate errors, the density matrix $\rho_{\mathrm{exact}}\pa{t}$ describes the state
\begin{equation}
    \rho_{\mathrm{exact}}^{\mathrm{opt}}\pa{t} = \ketbra{\psi_{\mathrm{exact}}\pa{t}}[\psi_{\mathrm{exact}}\pa{t}] \,. \label{eq:exact opt}
\end{equation}
$\ket{\psi_{\mathrm{exact}, 0}}$ and $\ket{\psi_{\mathrm{exact}}\pa{t}}$ are the exact ground state, at time $t_0$, and its time evolved state, respectively.
\begin{figure*}[hbt!]
\center
\includegraphics{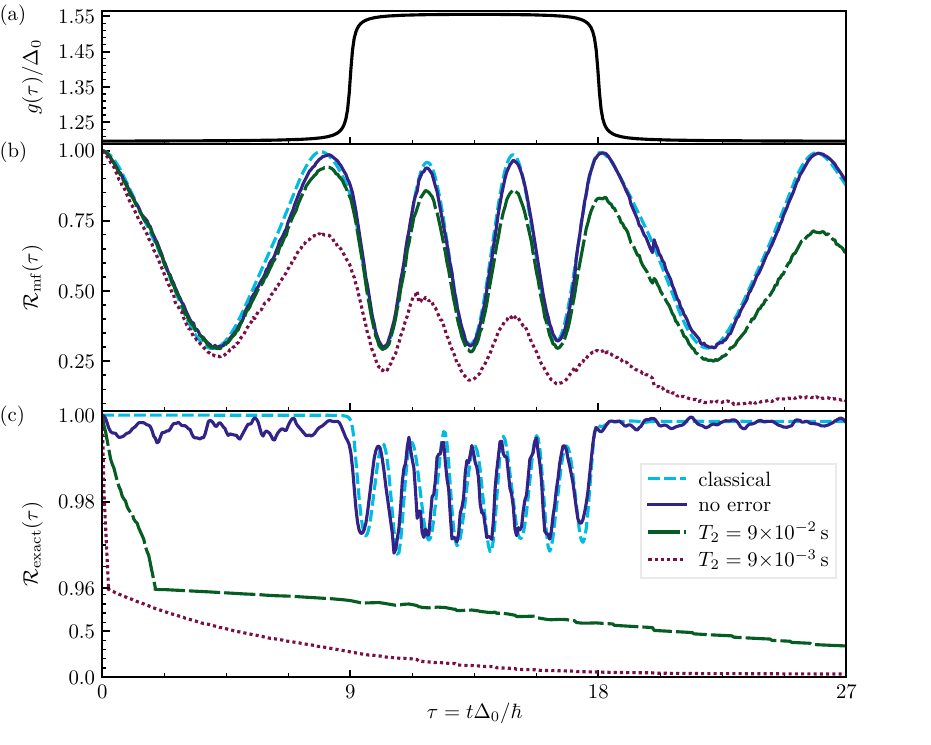}
\caption{The time dependent return probability $\mathcal{R}\pa{t}$ of a state $\ket{\psi\pa{t}}$ to its initial state $\ket{\psi_0}$ ($t_0 = 0$), where the state $\ket{\psi\pa{t}} = \mathcal{T}\bk{\exp\pa{-\frac{i}{\hbar} \int_{t_0}^{t}
  H_{\mathrm{BCS}}\pa{t'} \diff{t'}}} \ket{\psi_0}$ describes the time evolution, determined by the \ac{bcs} Hamiltion in \cref{eq:starting Hamiltonian}. (a) The time dependent coupling function $g\pa{\tau} \coloneqq g\pa{t\pa{\tau}}$ as defined in \cref{eq:coupling}. (b) The return probability $\mathcal{R}_{\mathrm{mf}}\pa{t} \coloneqq \mathcal{R}_{\mathrm{mf}}\pa{t\pa{\tau}}$, described in \cref{eq:mf return rate,eq:mf opt a,eq:mf opt b}, for the mean-field ground state $\ket{\mathrm{BCS}}$, defined in \cref{eq:BCS state}, as initial state. (c) The return probability $\mathcal{R}_{\mathrm{exact}}\pa{t} \coloneqq \mathcal{R}_{\mathrm{exact}}\pa{t\pa{\tau}}$, as described in \cref{eq:exact return rate,eq:exact opt} for the exact ground state of the \ac{bcs} Hamiltonian in \cref{eq:starting Hamiltonian} as initial state. In both plots, (b) and (c), the results are calculated with a simulated quantum computer provided by \cite{qiskit_github}. The dark blue solid line shows the return probability, calculated with the quantum algorithm under the assumption of error-free qubits and gates, and the light blue dashed line is the classical calculated return probability (``classical'' in the sense that a classical algorithm with high precision is used). Up to numerical errors, caused by the trotterization, these lines are the same. The green long-dashed line and the red-magenta colored dotted line are the results of the quantum simulation with noisy qubits, but without gate errors, i.e., the execution of the gates is assumed to be error free, however, the qubit errors can still spread from one qubit to another. As for the qubit error, we model transversal and longitudinal relaxation based on an amplitude-phase-damping channel with the coherence times $T_1 = \SI{1.25e-1}{s}$, $T_2 = \SI{9e-2}{s}$ (green long-dashed), $T_2 = \SI{9e-3}{s}$ (magenta-red dotted), a single qubit gate time $t_s = \SI{5e-8}{s}$ and a two qubit gate time $t_t = \SI{5e-7}{s}$. We do not consider any coherent or cross-talk errors.
}
\label{fig:plotRectangle}
\end{figure*}

A quantum quench describes the process of initializing a system in a certain state, often an eigenstate, e.g., the ground state, and subjecting the system to a time dependent modification of parameters or, for example, a perturbation \cite{Heyl2018,mitra_quantum_quench_dynamics,calabrese_ising_quench}. We simulate a quench, varying the superconducting gap $\Delta$, realized via a change of the coupling constant $g$. After some time the quench is performed backwards, i.e., $g$ is reset to its initial value.

We introduce a
dimensionless time $\tau\pa{t} = \frac{t \mathfrak{J}}{\hbar}$, where $\mathfrak{J}$
is an arbitrary energy unit. Without loss of generality, we set $t_0 = 0$. The classical simulation of our quantum circuit is
done for $n = 5$ qubits. For the energy levels we choose a harmonic oscillator, i.e.,
$\epsilon_j = \omega \pa{j + \frac{1}{2}}$, as one of the simplest non-interacting systems. The coupling strength $g$ is time dependent, according to
\begin{align} \label{eq:coupling}
  g\pa{t} &= \frac{\pa{g_c -
g_0}}{\pi^{2}} \bk{\arctan\pa{\pa{t - t_1}
  \frac{\mathfrak{J}}{\hbar\Gamma}} + \frac{\pi}{2}}\nonumber\\
  &\eqspace \times \bk{\arctan\pa{\pa{t_2 - t}
\frac{\mathfrak{J}}{\hbar\Gamma}} + \frac{\pi}{2}} +
g_0 \,,
\end{align}
which is plotted in \cref{fig:plotRectangle}a. The parameter $\Gamma$ describes the smoothness of the quench and $t_1$ and $t_2$ are the times when the quench and the reverse quench take place, respectively. $g_0$ is the initial coupling constant and $g_c$ is the coupling constant after the quench. The results of the numerical simulation are depicted in \cref{fig:plotRectangle}. The chosen set of parameters is given by $t_1 = 9 \, \hbar/\mathfrak{J}$, $t_2 = 18 \, \hbar/\mathfrak{J}$, $\Gamma = 0.1$ and $\omega = \frac{5}{3} \, \mathfrak{J}$, while $g_0$ and $g_c$ are
calculated from the superconducting gaps $\Delta_0 = \mathfrak{J}$ and $\Delta_c
= 2 \,\mathfrak{J}$, respectively (details are given in
\lcref{sec:Mean-field ground state}, \cref{eq:g_from_delta}). For all our simulations, we remain at zero temperature, $T = \SI{0}{K}$. The trotterization of the Gaudin Hamiltonians is performed using the first and second order equations from \cref{eq:trotter one,eq:trotter two}, where we specified the error $\varepsilon_1$ in \cref{eq:trotter_first_order} to be smaller than $\frac{3}{5} C$, where $C$ is the constant factor caused by the non-commuting terms in the Gaudin Hamiltonians, i.e., we set the number of Trotter steps to $m\pa{\tau} \approx \frac{5}{6} \tau^2$. This is only an approximation because the trotter-step-width $\tau/m\pa{\tau}$ has to be adapted to the splitting of the Dyson series, which depends dynamically on the system parameters in our simulation. With that we can count the number of $\cnot$s in our quantum circuit: There are $2\pa{n-1}$ $\Swap$ gates where each can be decomposed into $3$ alternating $\cnot$s. We have $n$ Gaudin terms, where each of them is trotterized with $m\pa{\tau}$ steps; each step contains of $n-1$ Heisenberg evolution operators as in \cref{qc:Implementation of the Heisenberg time evolution
  operator} (first order trotterization), which require $3$ $\cnot$s. There are $n$ Ising terms and each of them contains $n-1$ Ising evolution operators as in \cref{qc:Implementation of the Ising time evolution operator} with $2$ $\cnot$s. Summing things up, the total number of $\cnot$s is
\begin{subequations}
\begin{align}
    N_{\cnot}\pa{\tau} &= 6(n-1) + 3n\pa{n-1}m\pa{\tau} + 2n\pa{n-1}\\
    &= 2n^2 + 4n - 1 + \frac{5}{2}\pa{n^2 - n}\tau^2\\
    &= 69 + 50 \tau^2
\end{align}
\end{subequations}
In the last step we substituted $n = 5$. If we insert the largest simulation time in \cref{fig:plotRectangle}, $\tau = 27$, we have $36519$ $\cnot$ gates. Similar counting can be performed for for the single qubit gates.

\Cref{fig:plotRectangle} shows the results of our simulations. The simulations are performed with and without qubit errors, however always with perfect gates. The qubit errors are modelled with an amplitude-phase-damping channel. To compare the results, we additionally plotted the results from a classical algorithm, which is based on the diagonalization of the Hamiltonian at multiple time steps. The perfect quantum simulations, without qubit errors, lead to the same results as the classical algorithm, up to trotterization errors.

In the plotted regime, the mean-field ground state is apparently not a good approximation of the exact ground state, but this is not unexpected since we only consider five energy orbitals. \Cref{fig:fidelity} shows the fidelity between the mean-field ground state and the exact ground state as function of the orbital number $n$ for the chosen system parameters. For $n = 5$ the fidelity is approximately $0.217363$. The very small gradient of the fidelity in \cref{fig:fidelity} indicates that the mean-field approximation does not perform very well for the chosen system parameters regarding the approximation the exact ground state. This may be partly explained by the fact that the $\ket{\mathrm{BCS}}$ state can be considered as solution of a variation ansatz minimizing the energy expectation value. This means that while the energy expectation value of $\ket{\mathrm{BCS}}$ approximates the ground state energy fairly well, the state itself may not approximate the ground state similarly well if there is some other eigenstate that has an energy near to the ground state energy.  This justifies the use of the exact ground state in our simulations.

The curves that are simulated with noisy qubits deviate strongly from the perfect simulation. These deviations increase with time $t$ since more gates are needed and therefore the duration of the computation increases. This increases the effect of the qubit relaxation errors. In the case of the exact ground state, the relative differences between the extrema are so small that it is difficult to resolve any qualitative behavior if we consider the qubit noise.
\begin{figure}[t]
\center
\includegraphics{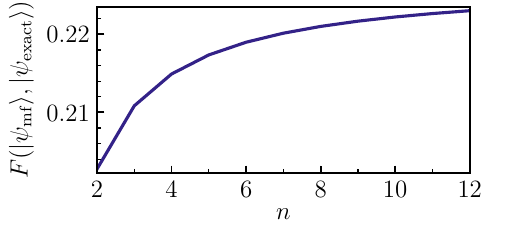}
\caption{
The fidelity $F$ between the mean-field ground state $\ket{\psi_{\mathrm{mf}}}$ and the exact ground state $\ket{\psi_{\mathrm{exact}}}$ as function of the number of orbitals $n$. Note that for pure states the fidelity equals the squared absolute value of the scalar product, $F=\abs{\braket{\psi_{\mathrm{mf}}}[{\psi_{\mathrm{exact}}}]}^2$. In this case $F$ is time-independent if the states are subject to the same time evolution.
}
\label{fig:fidelity}
\end{figure}

To improve the algorithmic performance for longer times $t$, one has to minimize the number of needed gates. 
One approach is to optimize the choice of the time steps in \cref{eq:discretize}.
In areas where $g\pa{t} \approx \text{const}$, the time step $\Delta t_{j, j-1}$ can
be bigger than in areas where $g\pa{t}$ is changing fast. We chose the time-steps
depending only on the first and second derivative of $g\pa{t}$ in a similar manner to
gradient descent methods, i.e., we made a ``big'' time step if both, the first and second derivatives were ``small'' and vice versa. However this approach does not directly reduce the number of needed gates for the trotterization. This may be reached by using higher order Trotter-Suzuki formulas,
however please note that the optimal order depends on the time, e.g., as in \cref{eq:trotter complexity}, and it is even more difficult to find the optimal order if the system parameters vary over time. In the present paper, we used first and
second order formulas as given in the \cref{eq:trotter one,eq:trotter two}. Another more hardware-specific optimization would be
to implement the circuit using only native gates and if possible using the ones with the
smallest errors, i.e., helping the transpiler to find the best circuit. One can also try
to trotterize $U\pa{H_{q}, t}$ into gates including more than two-qubits. We used the two-qubit
gate described in \cref{qc:Implementation of the Heisenberg time evolution operator} and
a rotation around the $z$-axis.

\section{Conclusion}

Our work provides a quantum algorithm capable of simulating the time-dependent \ac{bcs} model. We restricted ourselves to the space of Cooper-pairs, which enabled us to map the physical problem very efficiently with order $\mathcal{O}\pa{1}$ onto a quantum register, in contrast, for example, to the Jordan-Wigner mapping. The algorithm exploits invariants of the \ac{bcs} system, i.e., we expressed the Hamiltonian with the commuting Gaudin Hamiltonians. Furthermore, we used the structure of the Gaudin Hamiltonians to implement the algorithm on a quantum register with a star-shaped coupling map, making only use of its native connectivity. Additionally, we demonstrated that this algorithm provides a general effective method to implement double products of two-qubit operators on such a quantum register. Finally, we showed some numerical results, simulating a quenched time evolution of the mean-field ground state and proposed possible optimizations for future work. Further alternative methods, which might be interesting to improve the performance of the algorithm, such as simulating the time evolution via truncating the Taylor series of the time evolution exponential, are proposed in \cite{childs_toward_the_first_simulation, berry_alternative_simulation}. The simulation results with quantum errors indicate that quantum error correction and/or better quantum hardware will be needed to perform real quantum simulations with valuable results. E.g., it has been shown that crosstalk errors can be mitigated with an appropriate algorithm \cite{perrin_crosstalk}. Apart from simulating the time evolution, a possible extension of our proposed algorithm is the calculation of the eigenvalues of the \ac{bcs} Hamiltonian via (hybrid) quantum phase estimation \cite{svore_fast_phase_estimation}, which makes use of a controlled time evolution (\lcref{sec:Controlled time evolution}).

\section*{Acknowledgments}
We acknowledge funding from the state of Baden-Württemberg through the Kompetenzzentrum Quantum Computing, Project QC4BW.

\onecolumngrid 
\appendix

\section{Properties of the Gaudin Hamiltonians} \label{sec:Properties of the Gaudin Hamiltonians}

If not otherwise noted, sums over Latin indices (e.g., $j, k, p, q$) run from $0$ to $n-1$
while sums over Greek indices (e.g., $\alpha, \beta, \gamma$) assume the values $x, y, z$.
\begin{theorem}  \label{the:commute with each other}
The Gaudin Hamiltonians commute with each other, i.e., $\bk{H_{q}, H_{p}} = 0$ {\upshape\cite{frenkel_gaudin_commutator_proof}}.
\end{theorem}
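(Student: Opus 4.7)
The plan is to compute $\bk{H_q, H_p}$ directly for $p\neq q$ (the case $p=q$ is trivial) from the $\mathrm{SU}(2)$ commutation relations of the pseudospins $\bm{K}_j$ and to show that the resulting expression collapses to zero via a single partial-fraction identity. The relations $\bk{K_j^\alpha, K_k^\beta} = i\,\delta_{jk}\,\epsilon^{\alpha\beta\gamma}K_j^\gamma$ (which follow from the identification $\bm{K}_j\vert_{\mathcal{H}_{\mathrm{C}}}\to\tfrac12\bm{\sigma}_j$ of \lcref{sec:Mapping on a css}, and also at the abstract algebraic level) yield the scalar-product identity $\bk{\bm{K}_a\cdot\bm{K}_b,\, \bm{K}_a\cdot\bm{K}_c} = i\,\bm{K}_a\cdot(\bm{K}_b\times\bm{K}_c)$ for distinct sites $a,b,c$, together with the obvious vanishing $\bk{\bm{K}_a\cdot\bm{K}_b,\, \bm{K}_c\cdot\bm{K}_d}=0$ whenever $\{a,b\}\cap\{c,d\}=\emptyset$.

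Expanding the commutator gives
\[
\bk{H_q, H_p} = 4\!\!\sum_{\substack{j\neq q\\ k\neq p}}\!\frac{\bk{\bm{K}_q\cdot\bm{K}_j,\,\bm{K}_p\cdot\bm{K}_k}}{(\epsilon_q-\epsilon_j)(\epsilon_p-\epsilon_k)} - 2\gamma\!\sum_{j\neq q}\!\frac{\bk{\bm{K}_q\cdot\bm{K}_j,\,K_p^z}}{\epsilon_q-\epsilon_j} - 2\gamma\!\sum_{k\neq p}\!\frac{\bk{K_q^z,\,\bm{K}_p\cdot\bm{K}_k}}{\epsilon_p-\epsilon_k},
\]
the $\gamma^2$ piece dropping out via $\bk{K_q^z, K_p^z}=0$. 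In the double sum only three cases can contribute: (i) $j=p$ and $k\neq q$, (ii) $k=q$ and $j\neq p$, and (iii) $j=k\notin\{p,q\}$. The corner $j=p,\,k=q$ is trivially zero since $\bm{K}_q\cdot\bm{K}_p$ commutes with itself. Using the scalar-product identity and the cyclic symmetry of the triple product, each surviving bilinear term can be rewritten as a multiple of the single operator $\bm{K}_j\cdot(\bm{K}_q\times\bm{K}_p)$ with $j$ as the summation dummy, and with a denominator built from $\epsilon_q-\epsilon_j$, $\epsilon_p-\epsilon_j$, and $\epsilon_q-\epsilon_p$.

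The decisive step is the partial-fraction identity
\[
\frac{1}{(\epsilon_q-\epsilon_j)(\epsilon_p-\epsilon_j)} = \frac{1}{\epsilon_q-\epsilon_p}\pa{\frac{1}{\epsilon_p-\epsilon_j} - \frac{1}{\epsilon_q-\epsilon_j}},
\]
applied to the case-(iii) denominator. This splits the (iii)-summand into two pieces that match, up to an overall sign, the contributions of (i) and (ii) after renaming the dummy index, and the three cases then cancel pairwise. For the two $\gamma$-mixed sums, only $j=p$ in the first and $k=q$ in the second survive; both reduce via $\bk{K_q^z, K_q^\alpha} = i\,\epsilon^{z\alpha\gamma} K_q^\gamma$ to terms proportional to $(\bm{K}_q\times\bm{K}_p)^z/(\epsilon_q-\epsilon_p)$ with opposite overall signs, and therefore also cancel.

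The main obstacle is the sign and index bookkeeping required to put cases (i)--(iii) into a common form: each case comes with its own cyclic permutation of $(\bm{K}_q,\bm{K}_p,\bm{K}_j)$ inside the triple product and its own product of $\epsilon$-differences, and the cancellation only becomes manifest once these are aligned consistently, using the antisymmetry $\bm{K}_a\cdot(\bm{K}_b\times\bm{K}_c)=-\bm{K}_a\cdot(\bm{K}_c\times\bm{K}_b)$. One also needs to verify that the apparent overlap of cases (i) or (ii) with (iii) is automatically excluded by the original summation constraints $j\neq q$, $k\neq p$, so that the case decomposition is genuinely disjoint. Once this accounting is carried out, the cancellation is purely algebraic and requires no spectral information beyond the distinctness of the $\epsilon_j$.
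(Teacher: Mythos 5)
Your proposal is correct and follows essentially the same route as the paper's own proof: a direct expansion of $\bk{H_q,H_p}$ using the $\mathrm{su}(2)$ relations of the $\bm{K}_j$, separate cancellation of the $\gamma$-mixed terms, and the partial-fraction identity to collapse the surviving triple-product contributions from the double sum. The differences are purely notational (vector triple products and an explicit disjoint case decomposition on index coincidences, versus the paper's component-wise Levi-Civita bookkeeping with Kronecker deltas).
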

\begin{proof}
Let $q \neq p$. We calculate the commutator separated in three steps. Let us start with the
commutator of the last terms and the commutator of the mixed terms: 
\begin{align}
  \bk{K^{z}_{q}, K^{z}_{p}} &= 0 \,,
\end{align}
\vspace{-0.3cm}
\begin{subequations}
\begin{align}
  \bk{K^{z}_{q}, \sum_{j\neq p} \frac{\bm{K}_{p} \cdot \bm{K}_{j}}{\epsilon_{p} -
\epsilon_{j}}} +  \bk{\sum_{k\neq q} \frac{\bm{K}_{q} \cdot \bm{K}_{k}}{\epsilon_{q} -
\epsilon_{k}}, K^{z}_{p}} &= \bk{K^{z}_{q}, \frac{\bm{K}_{p} \cdot
\bm{K}_{q}}{\epsilon_{p} - \epsilon_{q}}} + \bk{\frac{\bm{K}_{q} \cdot
\bm{K}_{p}}{\epsilon_{q} - \epsilon_{p}}, K^{z}_{p}}\\
&= \frac{1}{\epsilon_{p} -
\epsilon_{q}} \bk{K^{z}_{q} + K^{z}_{p}, \bm{K}_{p} \cdot \bm{K}_{q}}\\
  &= 0 \,.
\end{align}
\end{subequations}
Now the commutator of the first terms (the intermediate steps are explained below):
\begin{subequations}
\begin{align}
  \bk{\sum_{j\neq p} \frac{\bm{K}_{p} \cdot
\bm{K}_{j}}{\epsilon_{p} - \epsilon_{j}}, \sum_{k\neq q}
\frac{\bm{K}_{q} \cdot \bm{K}_{k}}{\epsilon_{q} - \epsilon_{k}}} &= \sum_{\alpha\beta}
\sum_{j\neq p} \sum_{k\neq q}
\frac{\bk{K^{\alpha}_{p} K^{\alpha}_{j}, K^{\beta}_{q} K^{\beta}_{k}}}{\pa{\epsilon_{p} -
\epsilon_{j}}\pa{\epsilon_{q} - \epsilon_{k}}}\label{eq:com_a}\\
&= \sum_{\alpha\beta\gamma} i \varepsilon^{\alpha\beta\gamma} \bcdot{
\sum_{j\neq p, q} \frac{K^{\alpha}_{p} K^{\beta}_{q} K^{\gamma}_{j}}{\pa{\epsilon_{p} -
\epsilon_{j}}\pa{\epsilon_{q} - \epsilon_{j}}} }\nonumber\\
    &\mathrel{\phantom{=}}+ \dotbc{ \sum_{j\neq q} \frac{K^{\alpha}_{p} K^{\gamma}_{q}
  K^{\beta}_{j}}{\pa{\epsilon_{p} - \epsilon_{q}}\pa{\epsilon_{q} - \epsilon_{j}}} +
  \sum_{j\neq p} \frac{K^{\beta}_{q} K^{\gamma}_{p}
  K^{\alpha}_{j}}{\pa{\epsilon_{p} - \epsilon_{j}}\pa{\epsilon_{q} - \epsilon_{p}}}}\label{eq:com_b}\\
&= \sum_{\alpha\beta\gamma}  \sum_{j\neq p, q} i
\varepsilon^{\alpha\beta\gamma} \bcdot{ \frac{K^{\alpha}_{p}K^{\beta}_{q}
K^{\gamma}_{j}}{\pa{\epsilon_{p} - \epsilon_{j}}\pa{\epsilon_{q} - \epsilon_{j}}} }\nonumber\\
    &\mathrel{\phantom{=}}+ \dotbc{  \frac{K^{\alpha}_{p} K^{\gamma}_{q} K^{\beta}_{j}}{\pa{\epsilon_{p} -
    \epsilon_{q}}\pa{\epsilon_{q} - \epsilon_{j}}} + \frac{K^{\beta}_{q} K^{\gamma}_{p}
K^{\alpha}_{j}}{\pa{\epsilon_{p} - \epsilon_{j}}\pa{\epsilon_{q} - \epsilon_{p}}}}\label{eq:com_c}\\
&= \sum_{\alpha\beta\gamma}  \sum_{j\neq p, q} i
\varepsilon^{\alpha\beta\gamma} K^{\alpha}_{p}K^{\beta}_{q}
K^{\gamma}_{j}\bcdot{ \frac{1}{\pa{\epsilon_{p} - \epsilon_{j}}\pa{\epsilon_{q} - \epsilon_{j}}} }\nonumber\\
    &\mathrel{\phantom{=}}- \dotbc{  \frac{1}{\pa{\epsilon_{p} -
    \epsilon_{q}}\pa{\epsilon_{q} - \epsilon_{j}}} - \frac{1}{\pa{\epsilon_{p} - \epsilon_{j}}\pa{\epsilon_{q} - \epsilon_{p}}}}\label{eq:com_d}\\
  &= 0 \,.\label{eq:com_e}
\end{align}
\end{subequations}
In \cref{eq:com_a}~$\to$~\cref{eq:com_b} we used
\begin{subequations}
\begin{align}
  \bk{K^{\alpha}_{p} K^{\alpha}_{j}, K^{\beta}_{q} K^{\beta}_{k}} &= K^{\alpha}_{p}
K^{\beta}_{q} \bk{K^{\alpha}_{j}, K^{\beta}_{k}} + K^{\alpha}_{p} \bk{K^{\alpha}_{j}, K^{\beta}_{q}} K^{\beta}_{k}\nonumber\\
  &\mathrel{\phantom{=}}+ K^{\beta}_{q} \bk{K^{\alpha}_{p}, K^{\beta}_{k}} K^{\alpha}_{j} +
  \bk{K^{\alpha}_{p}, K^{\beta}_{q}} K^{\beta}_{k} K^{\alpha}_{j}\\
  &= \sum_{\gamma} i \varepsilon^{\alpha\beta\gamma} \pa{\delta_{j,k} K^{\alpha}_{p}
  K^{\beta}_{q} K^{\gamma}_{j} + \delta_{j,q} K^{\alpha}_{p} K^{\gamma}_{j}
K^{\beta}_{k} + \delta_{p,k} K^{\beta}_{q} K^{\gamma}_{k} K^{\alpha}_{j}} \,,
\end{align}
\end{subequations}
with the Levi-Civita symbol, $\epsilon^{\alpha\beta\gamma} = 1$ if $\pa{\alpha, \beta, \gamma} = \pa{x, y, z}$, cyclical, and otherwise $\epsilon^{\alpha, \beta, \gamma} = -1$. In \cref{eq:com_b}~$\to$~\cref{eq:com_c} we used that in the last two sums the terms for $j=p$ and $j=q$, respectively,
cancel each other, since for $|\{\alpha, \beta, \gamma\}| = 3$ (fix $\gamma$ and exchange $\alpha$ and $\beta$)
\begin{equation}
  K^{\alpha}_{p} K^{\gamma}_{q} K^{\beta}_{p} + K^{\beta}_{q} K^{\gamma}_{p}
  K^{\alpha}_{q} - K^{\beta}_{p} K^{\gamma}_{q} K^{\alpha}_{p} - K^{\alpha}_{q}
  K^{\gamma}_{p} K^{\beta}_{q} \propto K^{\gamma}_{q}K^{\gamma}_{p} -
  K^{\gamma}_{p}K^{\gamma}_{q} = 0 \,.
\end{equation}
\cref{eq:com_c}~$\to$~\cref{eq:com_d} follows from permuting the indices $\alpha, \beta, \gamma$ and adapting the
signs. Finally, \cref{eq:com_d}~$\to$~\cref{eq:com_e} is valid, since the term in the braces equals zero.
\end{proof}
\begin{lemma}  \label{lem:propto L}
The sum of the Gaudin Hamiltonians is proportional to the $z$-component of the total
angular momentum, i.e., $-\gamma L^{z} = \sum_{q} H_{q}$ {\upshape\cite{sklyanin_gaudin_models}}.
\end{lemma}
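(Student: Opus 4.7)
The plan is to split the sum $\sum_q H_q$ according to the two types of terms in the definition of $H_q$. The $-\gamma K^z_q$ piece, summed over $q$, immediately gives $-\gamma \sum_q K^z_q = -\gamma L^z$, which is exactly the right-hand side of the claim. So the whole content of the lemma is to show that the Heisenberg-coupling piece, namely
\begin{equation}
  S \;\coloneqq\; \sum_{q=0}^{n-1} \sum_{\substack{j=0\\ j\neq q}}^{n-1} \frac{2\,\bm{K}_{q}\cdot \bm{K}_{j}}{\epsilon_{q}-\epsilon_{j}} \,,
\end{equation}
vanishes identically.

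The key observation is that the summand is antisymmetric under the relabelling $(j,q)\leftrightarrow (q,j)$. Indeed, the spin operators on distinct sites commute (they act on different tensor factors of $\mathcal{H}_{\mathrm{C}}$), so $\bm{K}_{q}\cdot\bm{K}_{j} = \bm{K}_{j}\cdot\bm{K}_{q}$, whereas the denominator flips sign: $\epsilon_{q}-\epsilon_{j} = -(\epsilon_{j}-\epsilon_{q})$. Hence, pairing each ordered pair $(q,j)$ with its transpose $(j,q)$ in the double sum cancels the contributions term by term, yielding $S=0$. This argument is symbolic and requires no manipulation of commutators, so I expect no real obstacle — the only thing to note is that the sums are unordered over $\{(q,j): q\neq j\}$, which is symmetric under transposition.

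Putting the two pieces together then gives $\sum_{q} H_q = S - \gamma L^z = -\gamma L^z$, as claimed. The proof is essentially a one-line antisymmetry argument once the definitions have been unpacked.
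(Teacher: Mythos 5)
Your proof is correct and is essentially the paper's own argument: the paper likewise isolates the $-\gamma K^z_q$ terms into $-\gamma L^z$ and kills the Heisenberg double sum by relabelling $(q,j)\leftrightarrow(j,q)$, using $\bm{K}_q\cdot\bm{K}_j=\bm{K}_j\cdot\bm{K}_q$ against the sign flip of $\epsilon_q-\epsilon_j$. Nothing is missing.
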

\begin{proof}
\begin{subequations}
\begin{align}
  \sum_{q} H_{q} + \gamma L^{z} &= 2 \sum_{q} \sum_{j\neq q} \frac{\bm{K}_{q} \cdot \bm{K}_{j}}{\epsilon_{q} - \epsilon_{j}}\\
&= \sum_{\substack{q,j\\j\neq q}} \frac{\bm{K}_{q} \cdot \bm{K}_{j}}{\epsilon_{q} -
    \epsilon_{j}} - \sum_{\substack{q,j\\j\neq q}} \frac{\bm{K}_{j} \cdot \bm{K}_{q}}{\epsilon_{j} - \epsilon_{q}}\\
  &= 0\\&\nonumber\qedhere
\end{align}
\end{subequations}
\end{proof}
\begin{proposition}  \label{cor:commute with L}
The Gaudin Hamiltonians commute with the $z$-component of the total angular momentum,
i.e., $\left[H_{q}, L^{z}\right] = 0$.
\end{proposition}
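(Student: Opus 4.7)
The plan is to obtain this as an immediate corollary of the two preceding results, rather than re-deriving it from scratch via the commutators of the $\bm{K}_j$. The key observation is that \cref{lem:propto L} gives a closed-form expression for $L^z$ in terms of the Gaudin Hamiltonians themselves, namely $-\gamma L^z = \sum_{p} H_p$, and \cref{the:commute with each other} says that all the $H_p$ pairwise commute with $H_q$.

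Concretely, I would first write, for the parameter $\gamma$ appearing in the definition of the Gaudin Hamiltonians (and assuming $\gamma \neq 0$, which is the relevant case for the \ac{bcs} application where $\gamma = -2/g$),
\begin{equation}
  -\gamma\, [H_q, L^z] = \Bigl[H_q, \sum_p H_p\Bigr] = \sum_p [H_q, H_p] = 0,
\end{equation}
using linearity of the commutator in the second argument together with \cref{the:commute with each other}. Dividing by $-\gamma$ then yields $[H_q, L^z] = 0$.

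The argument has essentially no obstacle; the only subtle point is the role of $\gamma$. If one wants a $\gamma$-independent statement, one can simply note that $H_q + \gamma K_q^z$ (the ``pure Gaudin'' part without the Zeeman term) also commutes with $L^z$ by the same reasoning, and that $[K_q^z, L^z] = 0$ because $L^z$ is a sum of mutually commuting $K_j^z$'s; hence $[H_q, L^z] = 0$ for any $\gamma$. This second remark removes any apparent dependence on the choice of $\gamma$ and completes the proof.
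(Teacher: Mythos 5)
Your proposal is correct and follows exactly the paper's own route: the paper's proof of this proposition is simply ``Follows directly from \cref{the:commute with each other,lem:propto L}'', i.e., combining the pairwise commutation of the Gaudin Hamiltonians with the identity $-\gamma L^z = \sum_p H_p$. Your write-up just makes the linearity step explicit and adds a harmless remark handling the $\gamma = 0$ case, which the paper does not bother with.
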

\begin{proof}
Follows directly from \cref{the:commute with each other,lem:propto L}.
\end{proof}
\begin{theorem}  \label{the:construct bcs}
For $\gamma = -\frac{2}{g}$ one can construct the \ac{bcs} Hamiltonian with the Gaudin
Hamiltonians {\upshape\cite{cambiaggio_integrability_pairing}}: 
\begin{equation}
  H_{\mathrm{BCS}} = -g \sum_{q}\epsilon_{q} H_{q} + g L^{z} + g \pa{L^{z}}^{2} + \mathrm{const} \,. \label{eq:splitted bcs}
\end{equation}
\end{theorem}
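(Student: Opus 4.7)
The plan is to substitute the definition of $H_q$ (with $\gamma=-2/g$) into the right-hand side of \cref{eq:splitted bcs} and show that, after simplification, it reproduces the $\bm{K}$-operator form $H_{\mathrm{BCS}} = -2\sum_j\epsilon_j K^z_j - g\sum_{j,k} K^+_j K^-_k + \mathrm{const}$ derived earlier in \cref{sec:Mapping on a css}. The Zeeman-like piece $-\gamma g \sum_q \epsilon_q K^z_q = -2\sum_q\epsilon_q K^z_q$ of $-g\sum_q\epsilon_q H_q$ directly reproduces the single-particle kinetic term, so the work is to show that the remaining Gaudin sum together with $gL^z+g(L^z)^2$ reconstructs $-g\sum_{j,k}K^+_j K^-_k$ up to a constant.

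The first step would be to symmetrize the Heisenberg-type sum. Since $\bm{K}_q$ and $\bm{K}_j$ commute for $q\neq j$ (they act on different orbitals), swapping the dummy indices $q\leftrightarrow j$ in half of the double sum and using the elementary identity
\begin{equation*}
\frac{\epsilon_q}{\epsilon_q-\epsilon_j} + \frac{\epsilon_j}{\epsilon_j-\epsilon_q} = 1,
\end{equation*}
valid for $q\neq j$ under the non-degeneracy assumption, collapses the weighted double sum so that the Gaudin contribution becomes $-g\sum_{q\neq j}\bm{K}_q\cdot\bm{K}_j$. I would then decompose $\bm{K}_q\cdot\bm{K}_j = K^z_q K^z_j + \tfrac{1}{2}(K^+_q K^-_j + K^-_q K^+_j)$ and, by relabeling in the ladder piece, rewrite this as $\sum_{q\neq j}K^z_q K^z_j + \sum_{q\neq j}K^+_q K^-_j$. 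The off-diagonal piece $-g\sum_{q\neq j}K^z_q K^z_j$ then cancels exactly against the corresponding term in $g(L^z)^2 = g\sum_j(K^z_j)^2 + g\sum_{q\neq j}K^z_q K^z_j$, leaving behind the diagonal remainder $g\sum_j(K^z_j)^2$.

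It remains to complete the restricted ladder sum to a full one via $\sum_{q\neq j}K^+_q K^-_j = \sum_{q,j}K^+_q K^-_j - \sum_j K^+_j K^-_j$, at which point the pairing term of $H_{\mathrm{BCS}}$ appears. The residual diagonal operators $g\sum_j K^+_j K^-_j + gL^z + g\sum_j (K^z_j)^2$ are reduced to a c-number by the Cooper-pair-subspace identifications $K^+_j K^-_j = \tfrac{1}{2} - K^z_j$ and $(K^z_j)^2 = \tfrac{1}{4}$, which follow from $\bm{K}_j\vert_{\mathcal{H}_{\mathrm{C},j}} = \tfrac{1}{2}\bm{\sigma}$ as established in \cref{sec:Mapping on a css}. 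The main obstacle is the combinatorial book-keeping of the many $K^z$ and $K^+K^-$ cross-terms and observing that the final collapse to a scalar hinges on the restriction to the invariant subspace $\mathcal{H}_{\mathrm{C}}$; the stated $\mathrm{const}$ then absorbs both this c-number and the constant from the initial $\bm{K}$-operator rewriting of $H_{\mathrm{BCS}}$.
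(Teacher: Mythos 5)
Your proposal is correct and follows essentially the same route as the paper: the decisive step in both is the symmetrization $\frac{\epsilon_q}{\epsilon_q-\epsilon_j}+\frac{\epsilon_j}{\epsilon_j-\epsilon_q}=1$, which collapses the energy denominators and turns the weighted Gaudin sum into $\sum_{q\neq j}\bm{K}_q\cdot\bm{K}_j$. The only difference is bookkeeping: the paper packages the remainder as $\bm{L}^2-\sum_q\bm{K}_q^2$ and splits $\bm{L}^2$ into $L^{+}L^{-}+L^z+(L^z)^2$, whereas you expand componentwise and evaluate the diagonal leftovers on $\mathcal{H}_{\mathrm{C}}$ — both reduce to the same constant, and your explicit remark that this final collapse uses the restriction to the Cooper-pair subspace is a point the paper leaves implicit.
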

\begin{proof}
It is
\begin{equation}
  \sum_{q}\epsilon_{q} H_{q} = \bm{L}^{2} -
  \sum_{q}\pa{\bm{K}^{2}_{q} + \gamma \epsilon_{q} K^{z}_{q}} \,,
\end{equation}
since
\begin{subequations}
\begin{align}
  2 \sum_{\substack{q,j\\j\neq q}} \epsilon_{q} \frac{\bm{K}_{q} \cdot
  \bm{K}_{j}}{\epsilon_{q} - \epsilon_{j}} &= \sum_{\substack{q,j\\j\neq q}}
  \frac{\epsilon_{q} \bm{K}_{q} \cdot \bm{K}_{j}}{\epsilon_{q} - \epsilon_{j}} -
  \sum_{\substack{q,j\\j\neq q}} \frac{\epsilon_{q} \bm{K}_{j} \cdot
  \bm{K}_{q}}{\epsilon_{j} - \epsilon_{q}}\\
&= \sum_{\substack{q,j\\j\neq q}} \frac{\epsilon_{q} \bm{K}_{q} \cdot
    \bm{K}_{j}}{\epsilon_{q} - \epsilon_{j}} - \sum_{\substack{q,j\\j\neq q}}
      \frac{\epsilon_{j} \bm{K}_{q} \cdot \bm{K}_{j}}{\epsilon_{q} - \epsilon_{j}}\\
  &= \sum_{\substack{q,j\\j\neq q}} \bm{K}_{q} \cdot \bm{K}_{j} \,.
\end{align}
\end{subequations}
Splitting the $\bm{L}^{2}$ we obtain
\begin{equation}
  \sum_{q}\epsilon_{q} H_{q} = \frac{\gamma}{2}\pa{-\sum_{q} 2
  \epsilon_{q} K^{z}_{q} + \frac{2}{\gamma} L^{+}L^{-}} + L^{z} + \pa{L^{z}}^{2} -
  \sum_{q} \bm{K}^{2}_{q} \,,
\end{equation}
where we can identify the term in the parenthesis with the \ac{bcs} Hamiltonian.
\end{proof}
\begin{proposition}
The Gaudin Hamiltonians and the $z$ component of the total angular momentum commute with the \ac{bcs} Hamiltonian, meaning
$\bk{H_{\mathrm{BCS}}, H_{q}} = 0$ and $\bk{H_{\mathrm{BCS}}, L^{z}} = 0$. All in all, all terms on the right-hand-side in \cref{eq:splitted bcs} commute with each other.
\end{proposition}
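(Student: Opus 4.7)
The plan is to assemble this proposition directly from the four results already established in the appendix, so essentially no new calculation is required. The key observation is that the statement $H_{\mathrm{BCS}} = -g\sum_q \epsilon_q H_q + g L^z + g(L^z)^2 + \mathrm{const}$ from \cref{the:construct bcs} expresses $H_{\mathrm{BCS}}$ as a polynomial in the operators $\{H_q\}_q$ and $L^z$, so once I know that every pair drawn from this generating set commutes, bilinearity of the commutator will deliver the proposition.

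First I would verify that every pair of operators in the generating set $\{H_0, \ldots, H_{n-1}, L^z\}$ commutes. For any two Gaudin Hamiltonians $H_p$ and $H_q$, this is exactly \cref{the:commute with each other}. For any $H_q$ and $L^z$, this is \cref{cor:commute with L}. And $[L^z, L^z] = 0$ trivially. A short extra step handles the quadratic term: since $[H_q, L^z] = 0$, Leibniz gives $[H_q, (L^z)^2] = L^z[H_q, L^z] + [H_q, L^z]L^z = 0$, and $[L^z, (L^z)^2] = 0$ is immediate. Scalar multiples and the additive constant obviously play no role in any commutator.

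Next I would combine these facts to conclude the two main statements. Using bilinearity of $[\cdot, \cdot]$ and the decomposition from \cref{eq:splitted bcs},
\begin{equation}
  [H_{\mathrm{BCS}}, H_p] = -g\sum_q \epsilon_q [H_q, H_p] + g[L^z, H_p] + g[(L^z)^2, H_p] = 0,
\end{equation}
and similarly
\begin{equation}
  [H_{\mathrm{BCS}}, L^z] = -g\sum_q \epsilon_q [H_q, L^z] + g[L^z, L^z] + g[(L^z)^2, L^z] = 0,
\end{equation}
which is exactly what is claimed. The last sentence of the proposition, that every pair of summands on the right-hand side of \cref{eq:splitted bcs} commutes, is precisely what I verified in the first step.

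There is no genuine obstacle here; the whole content of the proposition is a bookkeeping consequence of the preceding three results. The only mild subtlety worth pointing out in the write-up is the quadratic term $(L^z)^2$, which is why I would explicitly invoke the Leibniz rule for commutators rather than leave it implicit.
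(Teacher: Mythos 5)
Your proposal is correct and matches the paper's proof, which simply states that the result follows directly from \cref{the:commute with each other,the:construct bcs,cor:commute with L}; you have merely made explicit the bookkeeping (bilinearity and the Leibniz rule for the $(L^z)^2$ term) that the paper leaves implicit.
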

\begin{proof}
Follows directly from \cref{the:commute with each other,the:construct bcs,cor:commute with L}.
\end{proof}

\section{Mean-field ground state} \label{sec:Mean-field ground state}

The mean-field ground state is obtained by inserting the approximation
\begin{subequations}
\begin{align}
  c^{\dagger}_{j\up} c^{\dagger}_{j\down} c_{k\down} c_{k\up} &\approx
  \braket*{c^{\dagger}_{j\up} c^{\dagger}_{j\down}} \braket*{c_{k\down} c_{k\up}} +
  \braket{c^{\dagger}_{j\up} c^{\dagger}_{j\down}} c_{k\down} c_{k\up} +
  c^{\dagger}_{j\up} c^{\dagger}_{j\down} \braket*{c_{k\down} c_{k\up}}\\
  &= \Delta_{j}^{*} \Delta_{k} + \Delta_{j}^{*} c_{k\down} c_{k\up} +
  c^{\dagger}_{j\up} c^{\dagger}_{j\down} \Delta_{k}
\end{align}
\end{subequations}
in the Hamiltonian in \cref{eq:base hamiltonian} and diagonalizing the resulting
Hamiltonian with a Boguliubov transformation. $\Delta_{j} = - \sum_k V_{jk} \braket{c_{k\down} c_{k\up}}$ is the superconducting gap for each energy level $j$ where we replaced the constant coupling strength $-g$ with $V_{jk}$. Without going into more detail \cite{tinkham_introduction_to_superconductivity} we present the resulting
ground state:
\begin{equation}
  \ket{\mathrm{BCS}} = \prod_{j} \pa{u_{j} - v_{j} K_{j}^{+}} \ket{0} \,,
\end{equation}
with
\begin{align}
  \abs{u_j}^{2} &= \frac{1}{2} \pa{1 + \frac{\epsilon_j}{E_j}} \,,\\
  \abs{v_j}^{2} &= \frac{1}{2} \pa{1 - \frac{\epsilon_j}{E_j}} \,,\\
  \frac{v_j \Delta_{j}^{*}}{u_j} &= E_j - \epsilon_j \in \mathbb{R}_{+}\,,
\end{align}
where we used the mean-field eigenvalues
\begin{equation}
  E_j = \sqrt{\epsilon_j^{2} + \abs{\Delta_j}^{2}} \,.
\end{equation}
The superconducting gaps must fulfill the system of gap equations
\begin{equation}
\Delta_j = - \sum_{k} V_{jk} \frac{\Delta_k}{2 E_k} \tanh\pa{\frac{E_k}{2
  k_{\mathrm{B}} T}} \,, \label{eq:gap equation}
\end{equation}
for $j \in \bc{1, \ldots, n-1}$, where $T$ is the temperature and $k_{\mathrm{B}}$ the Boltzmann-constant. For $V_{jk} =
-g$, the right-hand side in \cref{eq:gap equation} is
independent of $j$, which implies
\begin{equation}
  \Delta_j = \Delta
\end{equation}
and for $\Delta \neq 0$
\begin{equation}
  \frac{2}{g} = \sum_{k} \frac{1}{E_k} \tanh\pa{\frac{E_k}{2
  k_{\mathrm{B}} T}} \,. \label{eq:g_from_delta}
\end{equation}

\section{Controlled time evolution} \label{sec:Controlled time evolution}

The algorithm described in \lcref{sec:Simulation} can be extended to a controlled version. Adding an additional control qubit $\ket{\phi}$, the time evolution shall be executed if $\ket{\phi} = \ket{1}$ and not executed if $\ket{\phi} = \ket{0}$. This can be reached by controlling the single qubit rotations. \Cref{qc:Controlled Heisenberg and Ising} shows the according Heisenberg and Ising gates.
\begin{figure}[b]
  \center
  \includegraphics{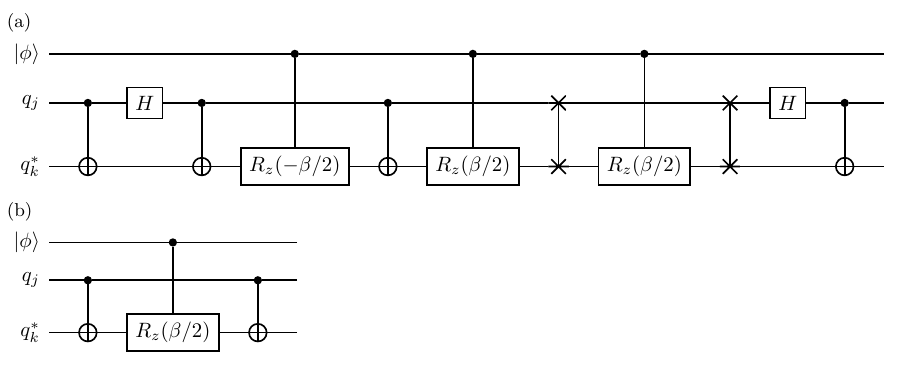}
  \captionShortcut[Controlled Heisenberg and Ising]{Controlled versions of the Heisenberg (a) and Ising (b) gates from \cref{qc:Implementation of the Heisenberg time evolution
  operator,qc:Implementation of the Ising time evolution operator} and \cref{eq:evolution helpers}, with $\beta = 4
\alpha$. The starred qubit indicates the central qubit (potentially after some swap operations) and $\ket{\phi}$ is the control qubit.}{qc}
\end{figure}
The structure of the whole circuit is similar to the one described in \cref{qc:First layout}, however the part containing the total angular momentum can be optimized to require fewer swap gates: Firstly, one should swap the roles of control and target qubit for the Ising-like terms $U\pa{I_q, t}$ as shown in \cref{qc:Controlled multi Ising}; secondly, the additional controlled rotation $\ee^{-i\frac{t}{\hbar} g \sigma^z_q / 2}$, from the last term in \cref{eq:evolution splitting}, should be executed directly after $U\pa{I_q, t}$.
\begin{figure}[H]
\center
\includegraphics{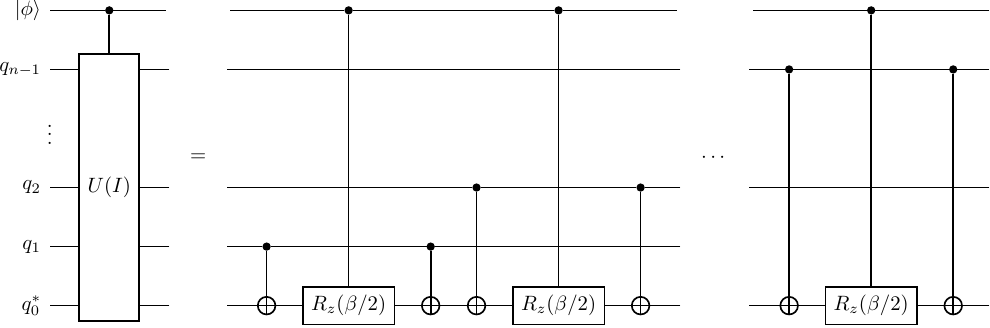}
\captionShortcut[Controlled multi Ising]{Controlled version of the the gate in \cref{qc:First layout}d. The starred qubit indicates the central qubit and $\ket{\phi}$ is the control qubit.}{qc}
\end{figure}

\twocolumngrid

\bibliographystyle{custom_IEEEtran}
\bibliography{literature}

\begin{acronym}[NISQ]
\acro{bcs}[BCS]{Bardeen-Cooper-Schrieffer \acroextra{superconductivity theory}}
\acro{css}[CSS]{central spin system}
\acro{nisq}[NISQ]{Noisy Intermediate-Scale Quantum}
\end{acronym}

\end{document}